\newtheorem{theorem}{Theorem}
\newtheorem{lemma}[theorem]{Lemma}\newtheorem{corollary}[theorem]{Corollary}
\newtheorem{prop}[theorem]{Proposition}
\newtheorem{definition}[theorem]{Definition}
\begin{document}

\title[Graphyne spectrum]{Quantum graph spectra of a graphyne structure}
\author{Ngoc T. Do and Peter Kuchment}
\address{Mathematics Department, Texas A\&M University, College Station, TX 77843-3365}
\email{dothanh@math.tamu.edu, kuchment@math.tamu.edu}
\date{}
\footnotetext{\emph{PACS} 81.05.U-, 81.05.ue}
\footnotetext{\emph{2010 Mathematics Subject Classification} 35Pxx, 82D80}
\keywords{Graphyne, graphene, spectrum, Floquet-Bloch theory, dispersion relation, Dirac point, Hill operator}
\maketitle
\begin{abstract}
We study the dispersion relations and spectra of invariant Schr\"odinger operators on a graphyne structure (lithographite).
In particular, description of different parts of the spectrum, band-gap structure, and Dirac points are provided.
\end{abstract}

\section*{Introduction}
Graphene, a monolayer of graphite, is famous for its unusual electric and mechanical properties (e.g., \cite{Novoselov,Geim}).
Recently, researchers suggested other 2D carbon allotropes which were given the common name "graphynes."
It has been suggested (see, e.g. \cite{Gorling,Bardhan}) that some graphynes, which have not been synthesized yet, might be even more interesting than the graphene.

Various standard and less standard approaches have been used to model the spectral structure of graphene and graphynes (one of the most popular was to use a version of density functional technique \cite{Ivanovskii}). One of the ways similar 2D structures have been modeled previously, was using the techniques of quantum networks, also known as quantum graphs (see, e.g., \cite{Amovilli,Ruedenberg}). In particular, several studies of spectra of Schr\"{o}dinger operators on graphene and carbon nanotube structures (e.g., \cite{Korotyaev1,Korotyaev2,Kuch_Post}) have been conducted, which have proven to be much simpler to study and preserving all essential ingredients of the dispersion relation. One should also be aware of a recent study of dispersion relations of 2D Schr\"odinger operators with honeycomb symmetry  in \cite{Fefferman,Fefferman_arxiv}, where in particular the mandatory presence of Dirac cones is established.

In this paper, we take the quantum graph approach similar to \cite{Kuch_Post} to study spectra of Schr\"odinger operators on the simplest graphyne among 14 various structures suggested in \cite{Ivanovskii} (it represents the 2D projection of the so called lithographite \cite{BucCas}).
From now on we reserve the word "graphyne" for this particular structure.
We derive the dispersion relations for these operators on graphyne.
From here, we extract various information about the spectral structure of the operators.
Unlike similar periodic operators in $\mathbb{R}^n$, the quantum graph operators can (and often do) have point spectrum (i.e., bound states).
We find this part of the spectrum and provide an explicit description of the corresponding eigenspaces\footnote{The presence of bound states is an artefact of the quasi-1D model. It, however, often indicates possible presence of very flat bands in the "grown up" system.}.
The presence of spectral gaps and conical ``Dirac'' points is also studied.
The formulations of the results involve the discriminant of the Hill operator with the potential obtained by periodic extension of the 1D potential on a single edge.

In Section \ref{S:geometry} we introduce the geometry of the structure and the operators of interest.
In Section \ref{S:spectrumgraphyne} we derive the dispersion relation and the band-gap structure for graphyne with the main results stated in Theorems \ref{T:main_graphyne} and \ref{T:conical_singularity}.
The proof of an auxiliary Proposition \ref{L:F_property} is given in Section \ref{S:Prop}.

\section{Geometry of graphyne structure and related Schr\"odinger operators}\label{S:geometry}

The graphyne structure that we study is represented by the graph $G$ shown in Fig. \ref{F:G} and has square symmetry, unlike the graphene's honeycomb one.
At each vertex there is a carbon atom that is bonded to other atoms. Chemical bonds between atoms are represented by the
edges connecting the corresponding vertices.
\begin{center}
\begin{figure}[ht!]
\includegraphics[scale=0.4]{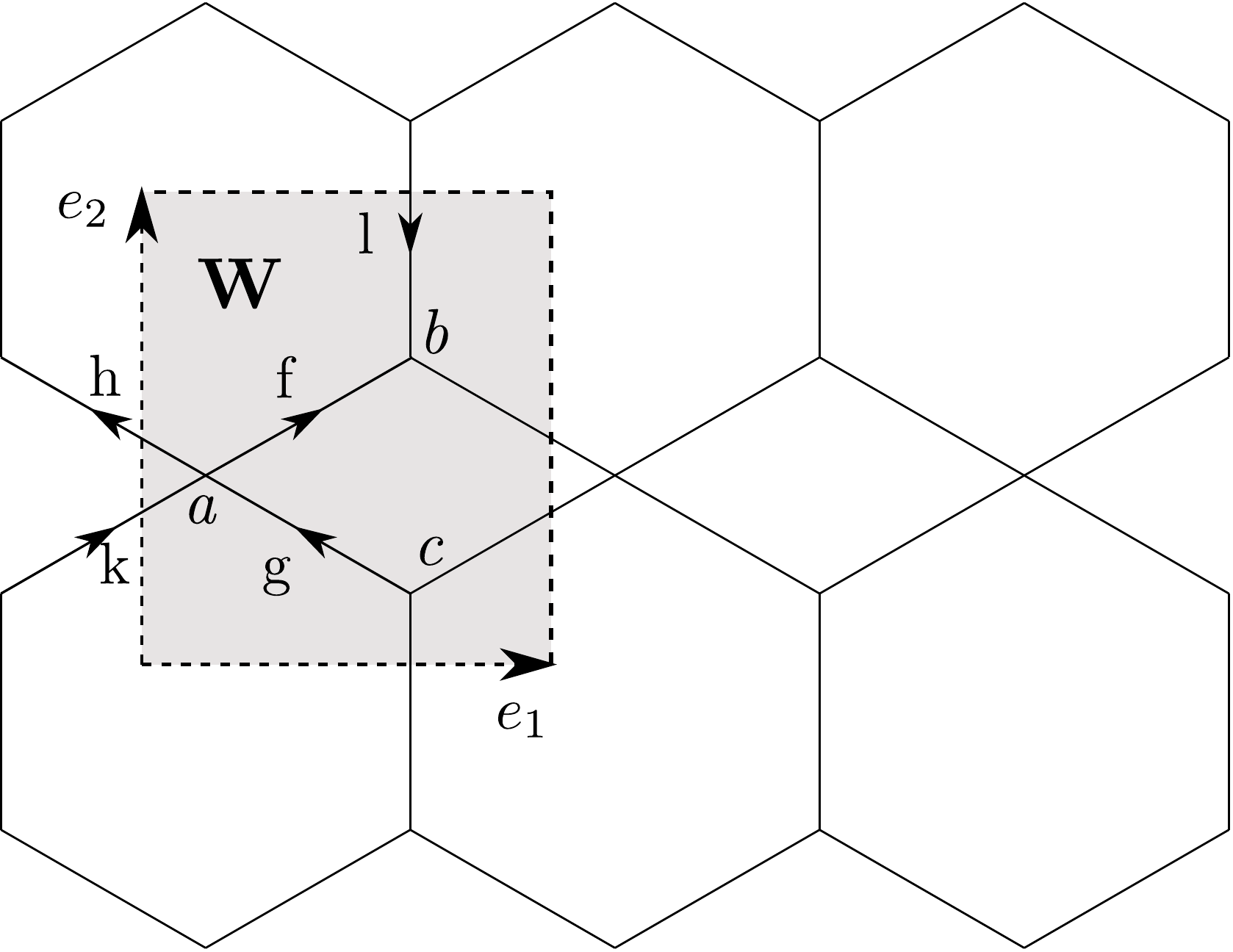}
\caption{The structure $G$ and a fundamental domain $W$ with vertices $a,b,c$ and parts of edges $f,g,h,k,l$ (and some of their lattice shifts)}
\end{figure}\label{F:G}
\end{center}
All the edges of $G$ are assumed to have length 1. We denote by $E(G)$ and $V(G)$ correspondingly the set of all edges and all vertices of $G$.
There is a free action of the group $\mathbb{Z}^2$ of integer vectors in $\mathbb{R}^2$ on G by the shifts by vectors
$p_1 e_1 + p_2 e_2,$ where $(p_1, p_2) \in \mathbb{Z}^2$ and $e_1=(\sqrt{3},0), e_2=(0,2)$.
We choose the domain $W$ shown in Fig. \ref{F:G} as the fundamental domain of this action.
It contains three vertices $a, b, c$ and pieces of five edges $f, g, h , k, l$ as shown in the figure.
We choose the directions of these edges as shown. Notice that the boundary of the chosen fundamental domain does not contain  any vertices (which is always possible to achieve). The reason for using such domain is that the presence of a vertex on its boundary would unnecessarily complicate considerations.
The entire structure $G$ can be obtained from $W$ by $\mathbb{Z}^2$-shifts, which also define directions on all edges of the graph $G$.

We define on each directed edge $e$ the arc length coordinate $x_e$ that identifies it with the directed segment $[0,1]$.
When it does not lead to ambiguity, we will use $x$ instead of $x_e$ to denote the coordinate on the edge $e$.
One can introduce now in a natural way the Hilbert space $L_2(e)$ as the space of all square integrable functions on the edge $e$ and $H_2(e)$ as the Sobolev space on $e$ that consists of functions with two distributional derivatives in $L_2(e)$.

We also define
$$
L_2(G)=\bigoplus_{e \in E(G)} L_2(e)
$$
as the space of all square integrable functions on $G.$

We will use the notations $u_e$ for the restriction of a function $u$ on $G$ to an edge $e$. We also use $u_e'$ for the derivative of $u_e$ in the direction of  the edge $e$.

Let $q_0(x)$ be an even and real $L_2$-function on $[0,1]$, i.e. $q_0(x) = q_0(1-x)$ for a.e. $x\in [0,1]$.

Using the described before identification of the directed edges with the segment $[0,1]$, we can transfer the potential $q_0(x)$ to each edge,
thus defining a potential $q(x)$ on the whole $G$.

It is not hard to show that the evenness assumption on $q_0$ implies the following property:
\begin{lemma}
The potential $q$ defined as above is invariant with respect to the symmetry group of the graph $G$.
\end{lemma}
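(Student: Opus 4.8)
The plan is to reduce everything to the behaviour of a symmetry on a single edge and then invoke the evenness of $q_0$. The key observation is that any symmetry $\gamma$ of $G$ acts as a permutation of $E(G)$, sending each edge isometrically onto another edge; since, in addition, $q$ was built by transplanting the \emph{same} even function $q_0$ onto every edge, the only effect a symmetry can have on $q$ over a given edge is to possibly reverse the arc-length parameter, and that is harmless.

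More precisely, I would argue as follows. By a symmetry of $G$ we mean an isometry of the metric graph; the symmetry group is generated by the lattice translations $y\mapsto y+p_1e_1+p_2e_2$, $(p_1,p_2)\in\mathbb{Z}^2$, together with the point symmetries (reflections and the rotation by $\pi$) of the structure in Fig. \ref{F:G}. Fix such a $\gamma$ and an edge $e\in E(G)$; then $e':=\gamma(e)\in E(G)$ and $\gamma|_e\colon e\to e'$ is an isometry between two unit segments. In the arc-length coordinates $x_e$ and $x_{e'}$ that identify $e$ and $e'$ with $[0,1]$, such an isometry must be one of the two maps $x_{e'}(\gamma(y))=x_e(y)$ or $x_{e'}(\gamma(y))=1-x_e(y)$, $y\in e$. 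Since by construction $q(y)=q_0(x_e(y))$ for $y\in e$ and $q(\gamma(y))=q_0(x_{e'}(\gamma(y)))$, in the first case we get $q(\gamma(y))=q_0(x_e(y))=q(y)$, while in the second case, using $q_0(x)=q_0(1-x)$ for a.e.\ $x$, we get $q(\gamma(y))=q_0(1-x_e(y))=q_0(x_e(y))=q(y)$ for a.e.\ $y\in e$. As this holds on every edge, $q\circ\gamma=q$ on $G$, i.e.\ $q$ is $\gamma$-invariant; since $\gamma$ was an arbitrary symmetry, $q$ is invariant under the whole symmetry group of $G$.

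I do not anticipate a real difficulty here. The single point deserving attention is to set things up so that ``symmetry of $G$'' is understood in the metric sense, which is what makes the two-case dichotomy for $\gamma|_e$ automatic and lets us avoid inspecting the generators one by one. Granting that, the evenness hypothesis on $q_0$ is used exactly once — to neutralise the orientation-reversing case — and this is precisely the reason it was imposed.
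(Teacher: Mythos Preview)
Your argument is correct and is precisely the one the paper has in mind: the paper does not spell out a proof of this lemma at all, merely remarking that ``it is not hard to show that the evenness assumption on $q_0$ implies'' the claimed invariance. Your reduction to the edgewise dichotomy $x\mapsto x$ versus $x\mapsto 1-x$, with evenness handling the orientation-reversing case, is exactly the intended (and only natural) verification.
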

We are now ready to construct the Schr\"odinger operator $H$ in $L_2(G)$, whose spectral properties will be studied in this paper.
The operator $H$ acts on each edge as follows:
\begin{equation}
Hu(x)=-\frac{d^2 u(x)}{dx^2}+ q(x) u(x).
\label{E:H}
\end{equation}
Its domain $D(H)$ consists of all functions $u(x)$ on $G$ such that:
\begin{enumerate}
 \item $u_e \in H_2(e)$, for all $e \in E(G),$
 \item
   \begin{equation}
   \sum_{e\in E(G)}\|u_e\|^2_{H_2(e)}<\infty,
   \label{E:D(H)_condition}
   \end{equation}
 \item at each vertex these functions satisfy \emph{Neumann vertex condition}, i.e. $u_{e_1}(v)=u_{e_2}(v)$ for any edges $e_1,e_2$ containing the vertex $v$ and
   $$
   \sum_{v\in e} u_e' (v)=0, \text{ for any vertex } v \text{ in } V(G).
   $$
\end{enumerate}

Thus defined operator is well-known to be unbounded and self-adjoint (e.g. \cite[Theorem 1.4.19]{Berk_Kuch}).
It is also invariant with respect to all symmetries of the graph $G$.

\section{Graphyne spectrum}\label{S:spectrumgraphyne}

In this section we study the spectrum of the operator $H$.
Let us describe first the main steps of our approach.
The technique of Floquet-Bloch theory \cite{Reed_Simon_4, Eastham,Kuch_Floquet_book,Berk_Kuch} allows us to reduce the consideration to a family of spectral problems on the fundamental domain $W$.
Then one can switch to a discrete problem (e.g., \cite{Pankr}, \cite[Section 3.6]{Berk_Kuch}).
This step uses the standard Hill's operator theory \cite{Eastham,Reed_Simon_4}.
Finally, the discrete problem can be analyzed rather explicitly.

Let us get to some detail now.
For each $\theta=(\theta_1, \theta_2)$ in the \emph{Brillouin zone} $B=[-\pi,\pi]^2$, let $H^\theta$ be the \emph{Bloch Hamiltonian} that acts as (\ref{E:H})
on the domain that consists of functions $u(x)$ that belong to $H^2_{loc}(G)$ and satisfy Neumann vertex condition along with the following \emph{cyclic} (or \emph{Floquet}) condition:
\begin{equation}
\label{E:cyclic_condition}
u(x+p_1 e_1+p_2 e_2)=u(x)e^{ip\theta}=u(x)e^{i(p_1 \theta_1+p_2 \theta_2)},
\end{equation}
for all $(p_1,p_2)\in \mathbb{Z}^2$ and all $x\in G.$

Due to this condition, such functions $u$ are uniquely determined by their restrictions to the fundamental domain $W.$

We have the direct integral expansion \cite[Section 4.3]{Berk_Kuch}
$$H=\int_B^\oplus H^\theta d\theta.$$
Therefore \cite[Theorem 4.3.1]{Berk_Kuch},
$$
\sigma(H)=\bigcup_{\theta\in [-\pi,\pi]^2} \sigma(H^\theta).
$$
It is well-known (e.g., \cite{Kuch_Quantum1, Berk_Kuch}) that operator $H^\theta$ has purely discrete spectrum $\sigma(H^\theta)=\{\lambda_j(\theta)\}$ with
$\displaystyle \lim_{j\rightarrow\infty}\lambda_j(\theta)=\infty$.
The multiple-valued function $\theta \longmapsto \{\lambda_j(\theta)\} $ is called \emph{dispersion relation} and its graph - \emph{dispersion surface} or \emph{Bloch variety} of the operator $H$.
Spectrum of $H$ is the range of the dispersion relation for $\theta$ changing in the Brillouin zone.
Thus, we now concentrate on studying the spectrum of $H^\theta,$ for $\theta\in B$, i.e on solving the eigenvalue problem:
\begin{equation}
\qquad H^\theta u =\lambda u, \lambda\in\mathbb{R},
\label{E:H_theta_equation}
\end{equation}
for $u\in H_2(W)$ satisfying the cyclic condition (\ref{E:cyclic_condition}) at the boundary and Neumann vertex condition inside.

Combining vertex and cyclic conditions we have:
$$
\left\{\begin{array}{lr}
u_f(0)=u_g(1)=u_h(0)=u_k(1)=:A\\
u_f '(0)-u_g '(1)+ u_h'(0)-u_k'(1)=0\\
u_f(1)=u_l(1)=u_h(1)e^{i\theta_1}=:B\\
u_f'(1)+u_l'(1)+u_h'(1)e^{i\theta_1}=0\\
u_g(0)=u_k(0)e^{i\theta_1}=u_l(0)e^{-i\theta_2}=:C\\
u_g'(0)+u_k'(0)e^{i\theta_1}+u_l'(0)e^{-i\theta_2}=0.
\end{array}\right.
$$

We will need another auxiliary operator:
\begin{definition}
We denote by  $H^D$ the \emph{Dirichlet Hamiltonian} on $[0,1]$ that acts  as (\ref{E:H}) with Dirichlet boundary conditions $u(0)=u(1)=0$. We also denote by $\Sigma^D$ the (discrete) spectrum of  $H^D$.
\end{definition}
For each $\lambda\notin \Sigma^D,$ there exist two linearly independent solutions $\varphi_0,\varphi_1$ such that $\varphi_0(0)=\varphi_1(1)=1, \varphi_0(1)=\varphi_1(0)=0.$
Sometimes we address $\varphi_0, \varphi_1$ as $\varphi_{0,\lambda}$ and $\varphi_{1,\lambda}$ to emphasize their dependence on $\lambda.$
We use the same notation $\varphi_0, \varphi_1$ for analogous functions on each edge of $W$ under fixed identification of these edges with the segment $[0,1]$, which should not lead to a confusion.
Then for $\lambda\notin\Sigma^D$ solution of (\ref{E:H_theta_equation}) can be represented as follows:
$$\left\{\begin{array}{lr}
u_f=A\varphi_0+B\varphi_1\\
u_g=C\varphi_0+A\varphi_1\\
u_k=Ce^{-i\theta_1}\varphi_0+A\varphi_1\\
u_h=A\varphi_0+Be^{-i\theta_1}\varphi_1\\
u_l=Ce^{i\theta_2}\varphi_0+B\varphi_1.
\end{array}\right.$$
Continuity and eigenvalue equation on each edge are already satisfied.
What is left to be checked is the zero flux condition at each of the three vertices in $W$:
\begin{equation}
\left\{\begin{array}{lr}
A(2\varphi_0'(0)-2\varphi_1'(1))+(B\varphi_1'(0)-C\varphi_0'(1))(1+e^{-i\theta_1})=0\\
A\varphi_0'(1)(1+e^{i\theta_1})+3B\varphi_1'(1)+Ce^{i\theta_2}\varphi_0'(1)=0\\
A\varphi_1'(0)(1+e^{i\theta_1})+Be^{-i\theta_2}\varphi_1'(0)+3C\varphi_0'(0)=0.
\end{array}\right.
\label{E:zero_flux_cond}
\end{equation}
Notice that $\varphi_1'(1)=-\varphi_0'(0)$ and $\varphi_1'(0)=-\varphi_0'(1)$
due to the evenness of function $q_0.$
Thus (\ref{E:zero_flux_cond}) becomes
\begin{equation}
\left\{\begin{array}{lr}
-4A\varphi_1'(1)+B\varphi_1'(0)(1+e^{-i\theta_1})+C\varphi_1'(0)(1+e^{-i\theta_1})=0\\
A\varphi_1'(0)(1+e^{i\theta_1})-3B\varphi_1'(1)+Ce^{i\theta_2}\varphi_1'(0)=0\\
A\varphi_1'(0)(1+e^{i\theta_1})+Be^{-i\theta_2}\varphi_1'(0)-3C\varphi_1'(1)=0.
\end{array}\right.
\label{E:ABC_varphi}
\end{equation}

Since $\varphi_1'(0)\neq 0,$ we can define
\begin{equation}
\eta(\lambda):=\frac{\varphi_{1,\lambda}'(1)}{\varphi_{1,\lambda}'(0)};
\label{E:eta}
\end{equation}
then (\ref{E:ABC_varphi}) is reduced to

$$\left\{\begin{array}{lr}
-4\eta(\lambda)A+(1+e^{-i\theta_1})B+(1+e^{-i\theta_1})C=0\\
(1+e^{i\theta_1})A-3\eta(\lambda)B+ e^{i\theta_2}C=0\\
(1+e^{i\theta_1})A+e^{-i\theta_2}B-3\eta(\lambda)C=0.
\end{array}\right.$$

Determinant of this system is $$-4[9\eta^3(\lambda)-\eta(\lambda)-(\cos\theta_1+1)(3\eta(\lambda)+\cos\theta_2)].$$
These calculations prove the following:

\begin{lemma}\label{L:rootfunction}
A point $\lambda\notin\Sigma^D$ is in the spectrum of the Schr\"odinger operator $H$ if and only if there exists $\theta=(\theta_1,\theta_2)\in B$ such that $x=\eta(\lambda)$ is
a root of the equation
\begin{equation}
\qquad 9x^3-x-(\cos\theta_1+1)(3x+ \cos\theta_2)=0.
\label{E:eta_equation}
\end{equation}
\end{lemma}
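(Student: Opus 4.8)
The plan is to run the Floquet--Bloch-to-discrete reduction already sketched above and then finish with a $3\times 3$ linear-algebra computation. By the direct integral decomposition $H=\int_B^\oplus H^\theta\,d\theta$ one has $\sigma(H)=\bigcup_{\theta\in B}\sigma(H^\theta)$, and since each $H^\theta$ has purely discrete spectrum, $\lambda\in\sigma(H)$ if and only if the eigenvalue problem (\ref{E:H_theta_equation}) with the cyclic condition (\ref{E:cyclic_condition}) admits a nonzero solution for some $\theta\in B$. So it suffices to characterize, for fixed $\theta$ and fixed $\lambda\notin\Sigma^D$, solvability of that boundary value problem on the fundamental domain $W$.

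First I would set up the edgewise representation. The hypothesis $\lambda\notin\Sigma^D$ is exactly the statement that the boundary-value map $u\mapsto(u(0),u(1))$ on the two-dimensional solution space of $-u''+q_0u=\lambda u$ on $[0,1]$ is an isomorphism; hence $\varphi_0,\varphi_1$ are well defined and form a basis of that space. Writing $u_e=\alpha_e\varphi_0+\beta_e\varphi_1$ on each of the five edges $f,g,h,k,l$ of $W$ and imposing only the continuity part of the Neumann condition together with the cyclic condition --- i.e. the first, third and fifth lines of the combined system displayed above --- pins down every $\alpha_e,\beta_e$ as the indicated linear expression in the three complex numbers $A,B,C$ (the common values of $u$ at the three vertices of $W$, with Floquet phases as recorded in that system), simply by evaluating $\varphi_0,\varphi_1$ at $0$ and $1$. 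Thus the solutions of (\ref{E:H_theta_equation}) satisfying continuity and the cyclic condition are parametrized exactly by $(A,B,C)\in\mathbb{C}^3$.

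Next I would impose the three flux (derivative-sum) conditions, which produce the homogeneous linear system (\ref{E:zero_flux_cond}) for $(A,B,C)$. Here I would use the evenness of $q_0$: the substitution $x\mapsto 1-x$ combined with $q_0(x)=q_0(1-x)$ shows $\varphi_1(x)=\varphi_0(1-x)$, whence $\varphi_1'(1)=-\varphi_0'(0)$ and $\varphi_1'(0)=-\varphi_0'(1)$, turning (\ref{E:zero_flux_cond}) into (\ref{E:ABC_varphi}). Since $\varphi_1(0)=0$, the number $\varphi_1'(0)$ cannot vanish (otherwise $\varphi_1\equiv 0$, contradicting $\varphi_1(1)=1$), so I may divide the three equations by $\varphi_1'(0)$ and introduce $\eta(\lambda)=\varphi_{1,\lambda}'(1)/\varphi_{1,\lambda}'(0)$ as in (\ref{E:eta}), arriving at the $3\times 3$ system whose matrix entries depend only on $\eta(\lambda)$ and $\theta$. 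A nontrivial $(A,B,C)$ exists precisely when this matrix is singular.

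Finally I would compute the determinant. Cofactor expansion along the first row and collecting terms with $(1+e^{i\theta_1})(1+e^{-i\theta_1})=2(1+\cos\theta_1)$ and $e^{i\theta_2}+e^{-i\theta_2}=2\cos\theta_2$ collapses everything to $-4\bigl[9\eta(\lambda)^3-\eta(\lambda)-(\cos\theta_1+1)(3\eta(\lambda)+\cos\theta_2)\bigr]$, a real quantity. Hence a point $\lambda\notin\Sigma^D$ lies in $\sigma(H)$ if and only if this bracket vanishes for some $\theta\in B$, i.e. $x=\eta(\lambda)$ is a root of (\ref{E:eta_equation}), which is the assertion. I do not expect a genuine conceptual obstacle: the only delicate points are the bookkeeping of which half-edges of $W$ meet at the three vertices and with which Floquet phase factors (so that the coefficient expressions, and hence the matrix, are correct), and avoiding a sign slip in the determinant expansion and its trigonometric simplification.
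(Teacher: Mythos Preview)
Your proposal is correct and follows essentially the same approach as the paper: the paper's ``proof'' of this lemma is precisely the computation laid out in the paragraphs preceding its statement (Floquet--Bloch reduction to $H^\theta$, edgewise representation via $\varphi_0,\varphi_1$ parametrized by $(A,B,C)$, the evenness simplification $\varphi_1'(1)=-\varphi_0'(0)$, $\varphi_1'(0)=-\varphi_0'(1)$, division by $\varphi_1'(0)\neq 0$, and the $3\times 3$ determinant). If anything, you supply slightly more justification than the paper does (e.g.\ the reason $\varphi_1'(0)\neq 0$ and the explicit identity $\varphi_1(x)=\varphi_0(1-x)$).
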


Let us now extend $q_0$ periodically from $[0,1]$ to the whole real axis $\mathbb{R}$ and consider the Hill operator $H^{per}$ on $\mathbb{R}$ as below:
$$\displaystyle H^{per}u(x)=-\frac{d^2u(x)}{dx^2}+ q_0(x)u(x).$$
Here we use the same notation $q_0(x)$ for the periodic extension.
The monodromy matrix $M(\lambda)$ of $H^{per}$ is defined by the formula
$$\begin{bmatrix}\varphi(1)\\ \varphi'(1)\end{bmatrix}= M(\lambda)\begin{bmatrix}\varphi(0)\\ \varphi'(0)\end{bmatrix},$$
where $\varphi$ satisfies the differential equation
\begin{equation}
\qquad -\frac{d^2\varphi(x)}{dx^2}+ q_0(x)\varphi(x)=\lambda \varphi(x) \text{ on } \mathbb{R}.
\label{E:Hill_equation}
\end{equation}
\emph{Discriminant} (or \emph{Lyapunov function}) $tr M(\lambda)$ of the Hill operator $H^{per}$ is denoted by $D(\lambda).$
Next proposition (\cite[Proposition 3.4]{Kuch_Post}) collects some well-known results about the spectra of Hill operators \cite{Eastham}:
\begin{lemma}\label{L:KuchPost}
\indent
\begin{enumerate}
     \item The spectrum $\sigma(H^{per})$ of $H^{per}$ is purely absolutely continuous.

     \item $\sigma(H^{per})=\{\lambda\in\mathbb{R}\big||D(\lambda)|\leq 2\}$.

     \item $\sigma(H^{per})$ consists of the union of closed non-overlapping (although, possibly touching) and non-zero length finite intervals (bands)
     $B_{2k}:=[a_{2k},b_{2k}], B_{2k+1}:=[b_{2k+1},a_{2k+1}]$ such that
     $$a_0<b_0\leq b_1<a_1\leq a_2<b_2\leq\ldots$$
     and $\displaystyle\lim_{k\rightarrow\infty}a_k=\infty.$
     \\
     The (possibly empty) segments $(b_{2k},b_{2k+1})$ and $(a_{2k},a_{2k+1})$ are called the spectral gaps.
     \\
     Here $\{a_k\}$ and $\{b_k\}$ are the spectra of the operators with periodic and anti-periodic conditions on $[0,1]$ correspondingly.
     \item Let $\lambda_k^D\in\Sigma^D$ be the $k^{th}$ Dirichlet eigenvalue labeled in increasing order.
     Then $\lambda_k^D$ belongs to (the closure of) the $k^{th}$ gap. When $q_0$ is even, $\lambda_k^D$ coincides with
     an edge of the $k$-th gap.
     \item If $\lambda$ is in the interior of the $k^{th}$ band $B_k$, then $D'(\lambda)\neq 0,$ and $D(\lambda)$ is a homeomorphism of the band $B_k$
     onto $[-2,2].$
     Moreover, $D(\lambda)$ is decreasing on $(-\infty,b_0)$ and $(a_{2k},b_{2k})$ and is increasing on $(b_{2k+1},a_{2k+1})$.
     It has a simple extremum in each spectral gap $[a_k,a_{k+1}]$ and $[b_k,b_{k+1}].$
     \item The dispersion relation for $H^{per}$ is given by
     $$D(\lambda)=2\cos\theta,$$
     where $\theta$ is the one-dimensional quasimomentum.
\end{enumerate}
\end{lemma}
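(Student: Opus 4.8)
The plan is to reduce everything to the classical Floquet analysis of the monodromy matrix $M(\lambda)$ of the Hill equation (\ref{E:Hill_equation}); all seven items are standard pieces of Hill-operator theory, so that the proof really amounts to recalling the relevant facts from \cite{Eastham} (or simply citing \cite[Proposition 3.4]{Kuch_Post}) and assembling them.

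First I would recall the fiber (Floquet--Bloch) decomposition $H^{per}=\int^\oplus_{[-\pi,\pi]}H^{per}_\theta\,d\theta$, where $H^{per}_\theta$ acts as $H^{per}$ on functions obeying $u(x+1)=e^{i\theta}u(x)$; each $H^{per}_\theta$ has compact resolvent, hence a discrete spectrum $\{\lambda_n(\theta)\}$. A value $\lambda$ lies in $\sigma(H^{per}_\theta)$ exactly when $e^{i\theta}$ is an eigenvalue of $M(\lambda)$. Since the Wronskian of any two solutions of (\ref{E:Hill_equation}) is constant, $\det M(\lambda)=1$, so the eigenvalues of $M(\lambda)$ form a pair $\mu,\mu^{-1}$ with $\mu+\mu^{-1}=\operatorname{tr}M(\lambda)=D(\lambda)$. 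Hence $\lambda\in\sigma(H^{per}_\theta)$ iff $D(\lambda)=2\cos\theta$; this is item (7), and taking the union over $\theta\in[-\pi,\pi]$ gives $\sigma(H^{per})=\{\lambda:|D(\lambda)|\le2\}$, i.e.\ item (2).

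Next I would exploit the analytic structure of the discriminant. As a function of $\lambda$, $D$ is entire of order $1/2$, real on $\mathbb{R}$, with $D(\lambda)\to+\infty$ as $\lambda\to-\infty$ and $D(\lambda)=2\cos\sqrt\lambda+O(|\lambda|^{-1/2})$ as $\lambda\to+\infty$. The solutions of $D(\lambda)=2$ are precisely the periodic eigenvalues $\{a_k\}$ on $[0,1]$, and those of $D(\lambda)=-2$ the antiperiodic ones $\{b_k\}$; counting these with multiplicities and tracking how $D$ oscillates between $\pm2$ produces the interlacing chain $a_0<b_0\le b_1<a_1\le a_2<b_2\le\cdots$ and identifies the bands with the connected components of $D^{-1}([-2,2])$, which is item (3). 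For a band-interior point $\lambda$ one picks $\theta\in(0,\pi)$ with $D(\lambda)=2\cos\theta$; since $H^{per}_\theta$ has simple spectrum for such $\theta$, the root of $D(\cdot)-2\cos\theta$ at $\lambda$ is simple, so $D'(\lambda)\ne0$ and $D$ restricts to a strict homeomorphism of each band onto $[-2,2]$; the alternation of the monotonicity direction between consecutive bands and the single critical point in each gap then follow from the intermediate-value bookkeeping, which is item (5). Absolute continuity of the spectrum (item 1) now drops out: each band function $\lambda_n(\theta)$ defined implicitly by $D(\lambda_n(\theta))=2\cos\theta$ is real-analytic and strictly monotone on $(0,\pi)$, so the spectral measure is the push-forward of Lebesgue measure under a local diffeomorphism, hence purely absolutely continuous.

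Finally, for item (4) I would represent the Dirichlet eigenvalues on $[0,1]$ as the zeros of the entry $m_{12}(\lambda)$ of $M(\lambda)=(m_{ij}(\lambda))$. Using $\det M(\lambda)=1$ and $D(\lambda)=m_{11}(\lambda)+m_{22}(\lambda)$, at such a $\lambda$ one has $m_{11}m_{22}=1$, hence $|D(\lambda)|=|m_{11}+m_{11}^{-1}|\ge2$, so $\lambda$ lies in the closure of some gap; an oscillation count for $m_{12}$ locates it in the $k$-th one. When $q_0$ is even, the reflection $x\mapsto 1-x$ maps solutions of (\ref{E:Hill_equation}) to solutions and forces $m_{11}(\lambda)=m_{22}(\lambda)$, so at a Dirichlet eigenvalue $D(\lambda)=2m_{11}(\lambda)=\pm2$, i.e.\ $\lambda_k^D$ is an endpoint of the $k$-th gap. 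I expect the only genuinely delicate point to be the bookkeeping in item (3): deriving the exact ordering of the $a_k$ and $b_k$ requires simultaneously using the growth order of $D$, its asymptotics at $\pm\infty$, and the non-vanishing of $D'$ on band interiors --- but this is entirely classical, and in the paper we would simply refer to \cite{Eastham}.
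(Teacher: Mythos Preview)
Your proposal is correct (modulo a harmless numbering slip: the lemma has six items, not seven, so your ``item (7)'' is the paper's item (6)). The paper, however, does not actually prove this lemma at all: it introduces the statement as a collection of well-known facts and simply cites \cite{Eastham} and \cite[Proposition~3.4]{Kuch_Post}. The only piece the paper comments on is the even-potential half of item~(4), and there it uses a different, more direct argument than yours: given a Dirichlet eigenfunction $u$ on $[0,1]$, the evenness of $q_0$ makes $u(1-x)$ another Dirichlet eigenfunction, so one of $u(x)\pm u(1-x)$ is a nonzero periodic or anti-periodic eigenfunction, forcing $\lambda_k^D$ to be a band edge. Your route via the monodromy-matrix identity $m_{11}(\lambda)=m_{22}(\lambda)$ (which indeed follows from $JMJ=M^{-1}$ for even $q_0$) reaches the same conclusion and has the advantage of staying entirely inside the $2\times2$ algebra you already set up, whereas the paper's symmetrization argument is shorter and avoids any matrix manipulation. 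Either is fine; what the paper buys by citing is brevity, what your sketch buys is self-containedness.
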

Claim (4) of the lemma about the even potential case can be explained as follows:
let $u(x)$ be the eigenfunction of the Hill operator $H^{per}$ corresponding to the $k^{th}$ Dirichlet eigenvalue $\lambda_k^D,$ i.e.
$$H^{per} u(x)=\lambda_k^D u(x), u(0)=u(1)=0.$$
Then $u(1-x)$ is also an eigenfunction corresponding to that eigenvalue.
Either $u(x)+ u(1-x)$ or $u(x)-u(1-x)$ is nonzero and therefore will be an eigenfunction corresponding to $\lambda_k^D$.
Since $u(x)+u(1-x)$ is periodic and $u(x)-u(1-x)$ is anti-periodic, $\lambda_k^D$ must coincide with an edge of the $k^{th}$ gap.

The following relation between function $\eta(\lambda)$ (see (\ref{E:eta})) and the discriminant $D(\lambda)$ of $H^{per}$ is well-known \cite{Eastham, Kuch_Post} and easy to establish:
\begin{equation}
\eta(\lambda)=\frac{1}{2}D(\lambda).
\label{E:eta_discriminant}
\end{equation}
Therefore, according to the statement (2) of Lemma \ref{L:KuchPost} and equality (\ref{E:eta_discriminant}), one needs to analyze the roots of the cubic equation (\ref{E:eta_equation}).

So far we have just dealt with $\lambda$ not in the Dirichlet spectrum $\Sigma^D$ of $H^{D}$.
Now let us consider the exceptional values $\lambda\in\Sigma^D$.

We introduce the following notion first:
\begin{definition}\label{D:loop}
An eigenfunction is said to be a \emph{simple loop state} if it is supported on a single hexagon or rhombus of the structure $G$ and vanishes at all vertices (see Fig. \ref{F:loopstate}).
\end{definition}

We can now describe what happens for $\lambda\in\Sigma^D$.
\begin{lemma}\label{L:simple_loop_state}
Each $\lambda\in\Sigma^D$ is an eigenvalue of infinite multiplicity of the operator $H$.
The corresponding eigenspace is generated by\footnote{I.e., is the closed linear hall of ...} the simple loop states.
\end{lemma}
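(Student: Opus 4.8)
The plan is to prove two things. First, that every simple loop state is an eigenfunction of $H$ with eigenvalue in $\Sigma^D$, which already forces infinite multiplicity; second, that conversely the full eigenspace of $H$ at any $\lambda\in\Sigma^D$ coincides with the closed linear span of the simple loop states. For the first part I would fix $\lambda=\lambda_k^D\in\Sigma^D$ and a Dirichlet eigenfunction $\psi$ on $[0,1]$; by uniqueness for the ODE $\psi'(0)\neq 0$ and $\psi'(1)\neq 0$, and, as in the discussion following Lemma~\ref{L:KuchPost}, evenness of $q_0$ forces $\psi(1-x)=\pm\psi(x)$, equivalently $\psi'(1)=\epsilon\,\psi'(0)$ with a single sign $\epsilon\in\{1,-1\}$. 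I would then set $u_e=t_e\psi_e$ on the edges of one hexagon (or rhombus) of $G$ and $u_e\equiv 0$ on all other edges. Continuity at every vertex is automatic since $u$ vanishes there, and the Neumann condition reduces to the zero-flux relations, which are trivial at vertices not on the cycle and read $(\pm t_e\pm t_{e'})\psi'(0)=0$ at a cycle-vertex where the cycle-edges $e,e'$ meet. Read around a cycle of length $m$, the only obstruction to a nonzero solution $(t_e)$ is that $\epsilon^m=1$; since every face of $G$ has even length this holds, and a nontrivial simple loop state exists. Such a state lies in $D(H)$ (finite support, $H_2$ on each edge) and satisfies $Hu=\lambda u$, and since $G$ contains infinitely many pairwise-disjoint hexagons, $\lambda$ is an eigenvalue of infinite multiplicity.

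For the second part the crucial step — and, I expect, the main obstacle — is the claim that every $u\in D(H)$ with $Hu=\lambda u$, $\lambda\in\Sigma^D$, vanishes at all vertices of $G$. The key observation is that on each edge $e$ the solution space of $-w''+q w=\lambda w$ is spanned by $\psi$ and a second solution $\chi$ normalized by $\chi(0)=1$, and comparing the Wronskian $\psi\chi'-\psi'\chi$ at the two endpoints gives $\chi(1)=\epsilon$; hence writing $u_e=\alpha_e\psi+\beta_e\chi$ we get $u_e(0)=\beta_e$ and $u_e(1)=\epsilon\beta_e$, so the vertex value $v\mapsto u(v)$ is multiplied by the fixed sign $\epsilon$ along every edge of $G$. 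Because every face of $G$ is a hexagon or a rhombus, $G$ is bipartite, and being connected it follows that $u$ takes the value $\pm C_0$ at every vertex for a single constant $C_0$. If $C_0\neq 0$ then $|\beta_e|=|C_0|$ for every edge, whence $\|u_e\|_{L_2(e)}^2\geq |C_0|^2\,\mathrm{dist}(\chi,\mathbb{C}\psi)^2$ with a positive constant independent of $e$; as $E(G)$ is infinite this contradicts $\sum_e\|u_e\|^2_{H_2(e)}<\infty$. Therefore $C_0=0$ and $u$ vanishes at every vertex.

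Granting the claim, $\beta_e=0$ for all $e$, so $u_e=t_e\psi_e$ with $(t_e)\in\ell^2(E(G))$ and $\|u\|_{L_2(G)}=\|\psi\|_{L_2}\,\|(t_e)\|_{\ell^2}$. The only surviving conditions are the zero-flux relations, which at each vertex $v$ become $\sum_{e\ni v}\pm t_e=0$; after the obvious sign gauge along the bipartition of $G$ (needed only when $\epsilon=-1$) these say precisely that $(t_e)$ is a square-summable flow on $G$ conserved at every vertex. By the classical description of the cycle space of a connected planar graph — the face boundaries generate it — together with a density argument in the $\ell^2$ setting, this flow space is the closed linear span of the boundary flows of the faces, i.e. of the hexagons and rhombi; undoing the gauge, these are exactly the simple loop states. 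Since $u\mapsto(t_e)$ is, up to the constant $\|\psi\|_{L_2}$, an isometry of the $\lambda$-eigenspace onto this flow space, the eigenspace is the closed linear span of the simple loop states.

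Besides the vertex-vanishing claim, the step needing the most care is the identification of the constraint space with the face-generated $\ell^2$ cycle space — the finite planar case is classical, but the passage to the infinite, $\mathbb{Z}^2$-periodic, $\ell^2$ setting should be done carefully (exactness of finitely supported flows with zero circulation around every face, plus density of finitely supported cycles in the $\ell^2$ kernel). One should also confirm from Fig.~\ref{F:G} that every face of $G$ is a hexagon or a rhombus; this follows from Euler's formula $V-E+F=0$ on the torus with $V=3$, $E=5$, giving $F=2$, together with the incidence count $\sum(\text{face length})=2E=10=6+4$.
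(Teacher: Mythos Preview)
Your argument is essentially correct, but it is organized quite differently from the paper's, and the place you yourself flag as ``needing the most care'' is exactly where the two routes diverge.

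The paper does \emph{not} prove that an arbitrary $L_2$-eigenfunction vanishes at every vertex. Instead it first invokes the general fact (\cite{Kuch_Quantum2}, \cite[Theorem 4.5.2]{Berk_Kuch}) that for a periodic quantum graph the compactly supported eigenfunctions are dense in each eigenspace, and then works only with those. For a compactly supported $\varphi$ one takes a boundary vertex joined by an edge to the exterior of the support and shows $\varphi$ vanishes there; iterating gives vanishing at all vertices. Finally, a direct ``peeling'' argument removes one face at a time from the support, expressing $\varphi$ as a \emph{finite} sum of simple loop states. Infinite multiplicity is obtained by a separate shift--operator argument.

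Your route is in one respect more elegant: the observation that $u_e(1)=\epsilon\,u_e(0)$ forces $|u(v)|$ to be constant on the connected graph, hence zero by square--summability, gives vertex vanishing for \emph{every} $L_2$-eigenfunction in one stroke, with no appeal to compact support. The price is that you must then identify the full $\ell^2$ space of conserved flows with the closed span of face boundaries. That statement is true for $\mathbb{Z}^2$-periodic planar graphs, but it is not elementary (it amounts to vanishing of the relevant $\ell^2$-cohomology, or equivalently to the very density result the paper cites). As written, this step is a genuine gap: you sketch ``exactness of finitely supported flows with zero circulation around every face, plus density of finitely supported cycles in the $\ell^2$ kernel,'' but neither assertion is proved, and the second is precisely the nontrivial input. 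If you are willing to cite \cite{Kuch_Quantum2} or \cite[Theorem 4.5.2]{Berk_Kuch} at that point, then in fact the whole $\ell^2$ cycle-space machinery becomes unnecessary: once finitely supported eigenfunctions are known to be dense, your vertex-vanishing argument (or the paper's boundary-vertex version) reduces the problem to compactly supported $u$, and the paper's simple peeling argument finishes. Your Euler-formula check that the two faces have lengths $6$ and $4$ is a nice touch and confirms the loop shapes assumed throughout.
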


\itshape{Proof.}
\upshape
For each $\lambda\in\Sigma^D,$ let $\psi_{\lambda}$ be the corresponding eigenfunction of operator $H^D$.
Since $q_0$ is even, one can assume function $\psi_\lambda$ to be either even or odd.
For an odd eigenfunction $\psi_\lambda$, we repeat it on each edge of hexagon/rhombus;
for even eigenfunction $\psi_\lambda$ - repeat around hexagon/rhombus with an alternating sign.
In both cases we get an eigenfunction of $H$ that lives only on one particular loop.
Thus $\lambda\in\sigma_{pp}(H)$.
\begin{figure}[ht!]
\includegraphics[scale=0.6]{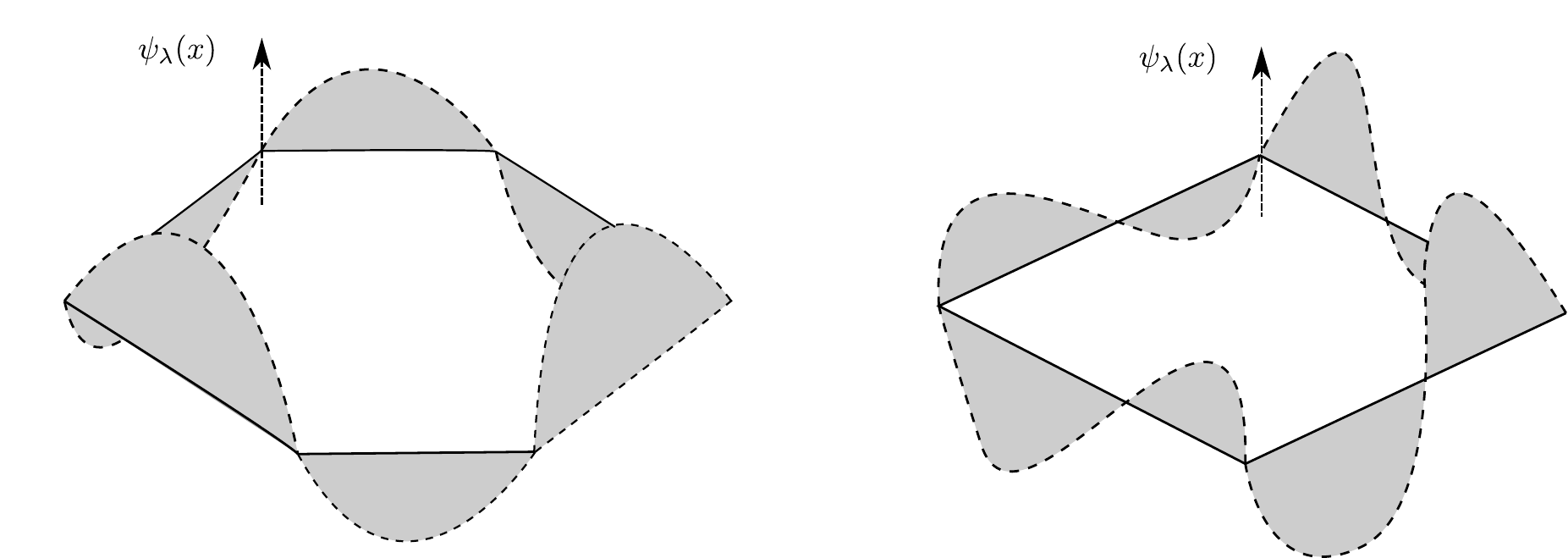}
\caption{Simple loop states constructed from even function on $[0,1]$ for hexagon and odd function on $[0,1]$ for rhombus}\label{F:loopstate}
\end{figure}
Let us prove infinite multiplicity of the eigenvalues $\lambda\in\Sigma^D$, which is a well-known feature of periodic problems.
Let $M_{\lambda}\subset L_2(G)$ be the corresponding eigenspace and $\gamma$ be some period vector of $G$.
The shift operator $S_{\gamma}$ by $\gamma$ acts in $M_\lambda$ as an unitary operator.
Suppose that $M_{\lambda}$ is finite dimensional, then $S_{\gamma}$ has an eigenfunction $f\in M_{\lambda}\subset L_2(G)$ corresponding to an eigenvalue $\mu$ such that $|\mu|=1.$
On the other hand, $f$ is multiplied by $\mu$ when shifted by the vector $\gamma$. Since $|\mu|=1$, $f$ clearly cannot belong to $L_2(G)$, which leads to contradiction.
Thus $\lambda\in\Sigma^D$ are eigenvalues of infinite multiplicity.

In order to prove that the eigenspace is generated by simple loop states with hexagonal and rhomboidal supports, it is enough to prove that these simple loop states generate all compactly supported eigenfunctions in the eigenspace $M_\lambda.$
Indeed, as it is shown in \cite{Kuch_Quantum2} (see also \cite[Theorem 4.5.2]{Berk_Kuch}), linear combinations of compactly supported eigenfunctions are dense in the space $M_\lambda$.

First we notice that each compactly supported eigenfunction $\varphi$ of $H$ vanishes at all vertices.
Indeed, due to connectedness of $G$, there must be a ``boundary'' vertex $v$ of the support that is connected by an edge with a vertex $w$ outside the support.
We claim that  $\varphi (v)=0$.
Otherwise, we have an edge such that the function vanishes at one end, $w$ (corresponding to $x=0$) and does not vanish at the other end.
We introduce a basis of solutions of (\ref{E:H_theta_equation}), functions $c_\lambda$ and $s_\lambda,$ such that $c_\lambda(0)=1,s_\lambda(0)=s_\lambda(1)=0$ (a non-trivial function $s_\lambda$ exists, since $\lambda\in\Sigma^D$).
The eigenfunction $\varphi$ can be represented as $\varphi(x)=A c_\lambda(x)+B s_\lambda(x)$.
In particular, $0=\varphi(0)=A,$ and so $0\neq\varphi(1)=B s_\lambda(1)=0,$ which leads to contradiction.
Repeating this argument, we conclude that the eigenfunction $\varphi$ vanishes at all vertices.
Besides, the support of $\varphi$ cannot have a vertex of degree 1. (Otherwise, due to Neumann boundary condition, both function and its derivative will vanish at that vertex, which makes function to be equal to zero.)

Now one needs to prove that $\varphi$ can be represented as a combination of simple loop states.
Consider the external boundary of the support of $\varphi$, which is a closed circuit $C$ of edges, containing the whole support inside.
The interior of this curve is a union of $N$ elementary hexagons and/or rhombuses of the graph $G$.
We begin with a boundary edge $e_0\in C$.
One of the $N$ internal hexagonal or rhomboidal loops must contain $e_0$.
Let $\varphi_0$ be the simple loop state that coincides with $\varphi$ on the edge $e_0$ and is extended to that loop as described before.
Function $\varphi-\varphi_0$ will be the new eigenfunction with a smaller support (number of loops $N-1$).
Continuing this process (see Fig.\ref{F:loop}), we will eventually represent the eigenfunction $\varphi$ as a combination of simple loop states. \qed
\begin{figure}[ht!]
\includegraphics[scale=0.4]{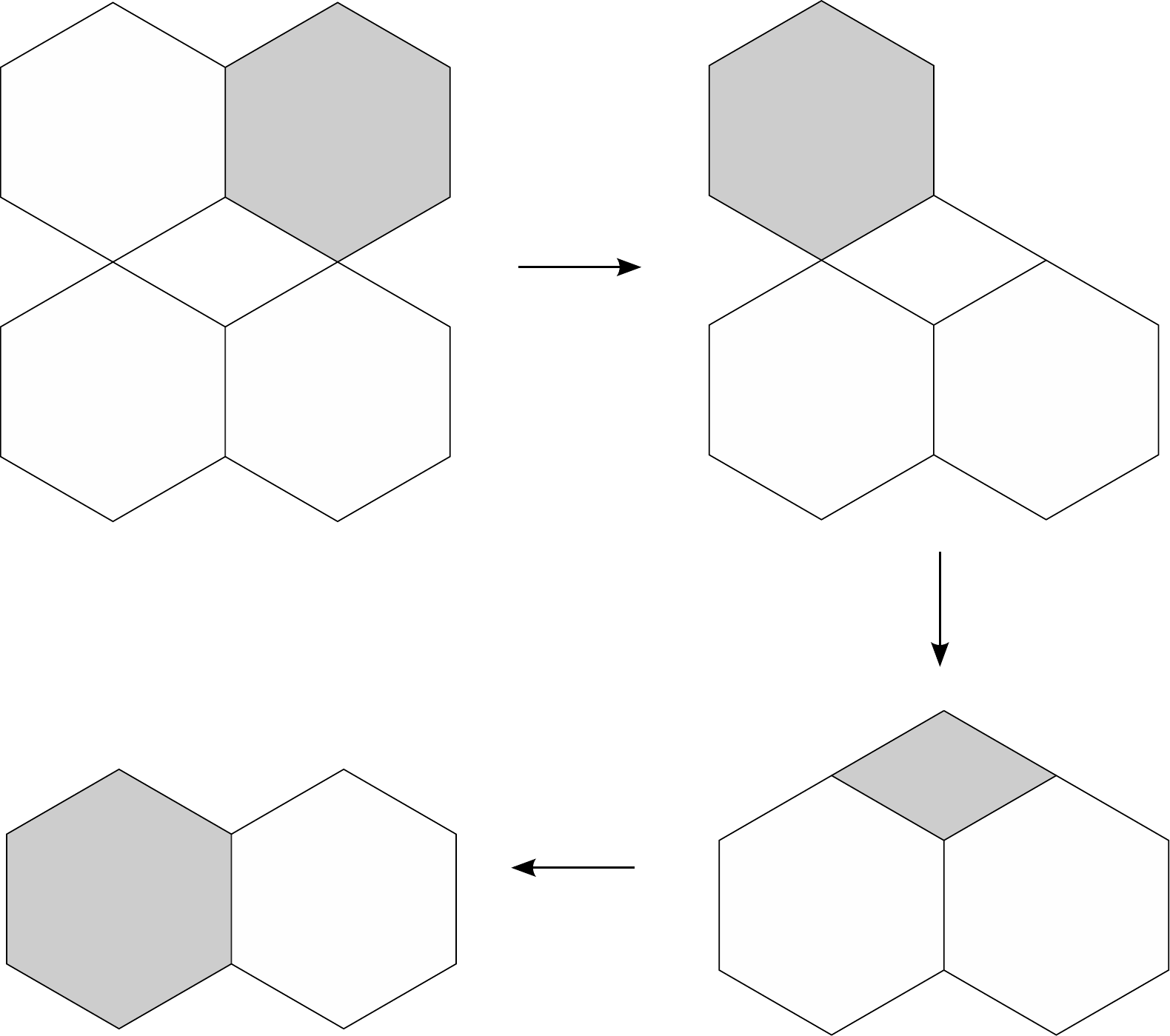}
\caption{An example of deleting simple loop states (the dark ones) from the support of an eigenfunction}\label{F:loop}
\end{figure}

In the next theorem, which is our main result of this section, we describe the dispersion relation and the structure of the spectrum of operator $H$.

Let $F(\theta)$ be the triple-valued function providing for each $\theta$ the three roots of the equation (\ref{E:eta_equation}).
By Proposition \ref{L:F_property}, which we will formulate and prove later, function $F$ is real-valued in the Brillouin zone.
Assume $F(\theta)=(F_1(\theta),F_2(\theta),F_3(\theta))$, where $F_1(\theta)\leq F_2(\theta)\leq F_3(\theta)$ for all $\theta\in B$.
Then we have

\begin{theorem}
\indent
\begin{enumerate}
     \item The singular continuous spectrum $\sigma_{sc}(H)$ is empty.

     \item The dispersion relation of operator $H$ consists of the following two parts:\\
     i) pairs $(\theta,\lambda)$ such that $0.5D(\lambda)\in F(\theta)$ (or, $\lambda\in D^{-1}(2F(\theta))$), where $\theta$ changing in the Brillouin zone;\\
     and \\
     ii) the collection of flat (i.e., $\theta$-independent) branches $\lambda\in\Sigma^D$.

     \item The absolutely continuous spectrum $\sigma_{ac}(H)$ has band-gap structure and is (as the set) the same as the spectrum $\sigma(H^{per})$
     of the Hill operator $H^{per}$ with potential obtained by extending periodically $q_0$ from $[0,1]$.
     In particular, $$\sigma_{ac}(H)=\{\lambda\in\mathbb{R}\big|\left|D(\lambda)\right|\leq 2\},$$
     where $D(\lambda)$ is the discriminant of $H^{per}.$

     \item The bands of $\sigma(H)$ do not overlap (but can touch).
     Each band of $\sigma(H^{per})$ consists of three touching bands of $\sigma(H)$.

     \item The pure point spectrum $\sigma_{pp}(H)$ coincides with $\Sigma^D$ and belongs to the union of the edges of spectral gaps of
     $\sigma(H^{per})=\sigma_{ac}(H)$.
     \\
     Eigenvalues $\lambda\in\Sigma^D$ of the pure point spectrum are of infinite multiplicity and the corresponding eigenspaces are
     generated by simple loop (hexagon or rhombus) states.

     \item Spectrum $\sigma(H)$ has gaps if and only if $\sigma(H^{per})$ has gaps.
\end{enumerate}
\label{T:main_graphyne}
\end{theorem}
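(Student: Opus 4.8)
The plan is to assemble the six claims of Theorem~\ref{T:main_graphyne} from the pieces already established. The backbone is Lemma~\ref{L:rootfunction} together with the identity~(\ref{E:eta_discriminant}): for $\lambda \notin \Sigma^D$, membership of $\lambda$ in $\sigma(H)$ is equivalent to the existence of $\theta \in B$ with $\tfrac12 D(\lambda)$ a root of the cubic~(\ref{E:eta_equation}), i.e. $\tfrac12 D(\lambda) \in F(\theta)$; and Lemma~\ref{L:simple_loop_state} handles $\lambda \in \Sigma^D$. I would first settle claim~(2), the dispersion relation, since (3)--(6) are essentially corollaries. Part~(i) of~(2) is Lemma~\ref{L:rootfunction} restated via~(\ref{E:eta_discriminant}); part~(ii) is Lemma~\ref{L:simple_loop_state}. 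For claim~(3) I would argue that the union over $\theta \in B$ of the root sets of~(\ref{E:eta_equation}) is exactly the interval $[-1,1]$ in the $x$-variable: for fixed $x$, the equation $9x^3 - x = (\cos\theta_1+1)(3x+\cos\theta_2)$ is solvable for some $(\theta_1,\theta_2)$ iff the value $9x^3-x$ lies in the range of $(\cos\theta_1+1)(3x+\cos\theta_2)$ as $\theta$ varies, and a short interval-arithmetic check (the right side sweeps a closed interval containing $0$, symmetric considerations) shows this holds precisely for $|x|\le 1$. Then $\tfrac12 D(\lambda) \in [-1,1]$ means $|D(\lambda)| \le 2$, which by Lemma~\ref{L:KuchPost}(2) is $\sigma(H^{per})$. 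One must also check that the finitely many points $\lambda \in \Sigma^D$ do not enlarge the set: by Lemma~\ref{L:KuchPost}(4), when $q_0$ is even each $\lambda_k^D$ sits at a gap edge of $\sigma(H^{per})$, hence already lies in $\overline{\sigma(H^{per})} = \sigma(H^{per})$, so adding the flat branches changes nothing set-theoretically. This gives claim~(3), and claim~(6) is then immediate since $\sigma(H)$ and $\sigma(H^{per})$ coincide as sets.

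For claim~(4), I would use Lemma~\ref{L:KuchPost}(5): on each band $B_k$ of $H^{per}$ the discriminant $D$ is a homeomorphism onto $[-2,2]$, equivalently $\eta = \tfrac12 D$ is a homeomorphism onto $[-1,1]$. The three branches $F_1 \le F_2 \le F_3$ partition (up to the crossing locus) the $x$-interval $[-1,1]$ into the three root sheets of the cubic; pulling these back through $\eta|_{B_k}$ subdivides each $H^{per}$-band into three subintervals whose interiors are disjoint, with possible common endpoints exactly where two roots $F_i$ coincide or where a subinterval endpoint maps to $\pm 1$. I would verify that the image of each $F_i$ is a subinterval of $[-1,1]$ (continuity plus Proposition~\ref{L:F_property} giving real-valuedness) and that consecutive sheets share only endpoints, yielding the ``touch but not overlap'' assertion. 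Claim~(5) collects Lemma~\ref{L:simple_loop_state} (infinite multiplicity, loop-state eigenspaces, $\sigma_{pp} = \Sigma^D$) with Lemma~\ref{L:KuchPost}(4) for the gap-edge location; nothing new is needed beyond citing these.

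The remaining claim is~(1), emptiness of $\sigma_{sc}(H)$. Here I would invoke the standard Floquet-theoretic dichotomy: for a periodic quantum graph operator, a point of the spectrum either belongs to a nonconstant analytic branch $\lambda_j(\theta)$ of the dispersion relation --- contributing absolutely continuous spectrum --- or lies on a flat branch, contributing a point eigenvalue of infinite multiplicity (see \cite[Section 4.5]{Berk_Kuch}). The flat branches are exactly $\lambda \in \Sigma^D$ by the analysis above, and on the non-flat part the branches are real-analytic and nonconstant (their level sets have measure zero), so by the Thomas-type argument there is no singular continuous component. I expect claim~(4) --- precisely controlling how the three root-sheets of the cubic meet and transfer through $D^{-1}$ --- to be the main obstacle, since it requires a careful description of the crossing set $\{F_i = F_j\}$ and its preimage under $D$; the surjectivity computation for claim~(3), while elementary, also needs the explicit interval bounds for the right-hand side of~(\ref{E:eta_equation}), which is where Proposition~\ref{L:F_property} must be deployed.
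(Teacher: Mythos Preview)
Your proposal follows the same architecture as the paper's proof: claim~(2) from Lemma~\ref{L:rootfunction} plus~(\ref{E:eta_discriminant}) and Lemma~\ref{L:simple_loop_state}; claim~(3) by showing the root set of~(\ref{E:eta_equation}) over $\theta\in B$ is exactly $[-1,1]$; claim~(4) from the monotonicity of $D$ on bands together with the fact that the ranges of $F_1,F_2,F_3$ are the non-overlapping intervals $[-1,-1/3],[-1/3,1/3],[1/3,1]$; claim~(6) immediate from~(3); and claim~(1) by the standard Thomas/Floquet dichotomy. The paper packages the needed facts about $F$ (real-valuedness, the three ranges, absence of flat branches) into Proposition~\ref{L:F_property} and simply cites it, whereas you sketch ad hoc verifications, but the content is the same.

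There is one genuine slip in your treatment of claim~(5). You write that Lemma~\ref{L:simple_loop_state} delivers ``$\sigma_{pp}=\Sigma^D$,'' but that lemma only gives the inclusion $\Sigma^D\subset\sigma_{pp}(H)$. The reverse inclusion --- that no $\lambda\notin\Sigma^D$ is an eigenvalue --- requires an additional argument: one must know that the non-Dirichlet branches $\lambda\mapsto D^{-1}(2F_j(\theta))$ are never constant in $\theta$, i.e.\ that $F$ has no flat branches. This is precisely Proposition~\ref{L:F_property}(4), and the paper invokes it explicitly at this point. You do mention nonconstancy of branches, but only under claim~(1) and only in passing; as written, your claim~(5) paragraph (``nothing new is needed beyond citing these'') omits the step that rules out eigenvalues outside $\Sigma^D$. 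The fix is straightforward: cite Proposition~\ref{L:F_property}(4) together with the strict monotonicity of $D$ on each band (Lemma~\ref{L:KuchPost}(5)) to conclude that the dispersion branches away from $\Sigma^D$ are nonconstant, hence contribute only absolutely continuous spectrum.
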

The statements of the theorem are illustrated in Fig. \ref{F:triple}.
\begin{figure}[ht!]
\includegraphics[scale=0.8]{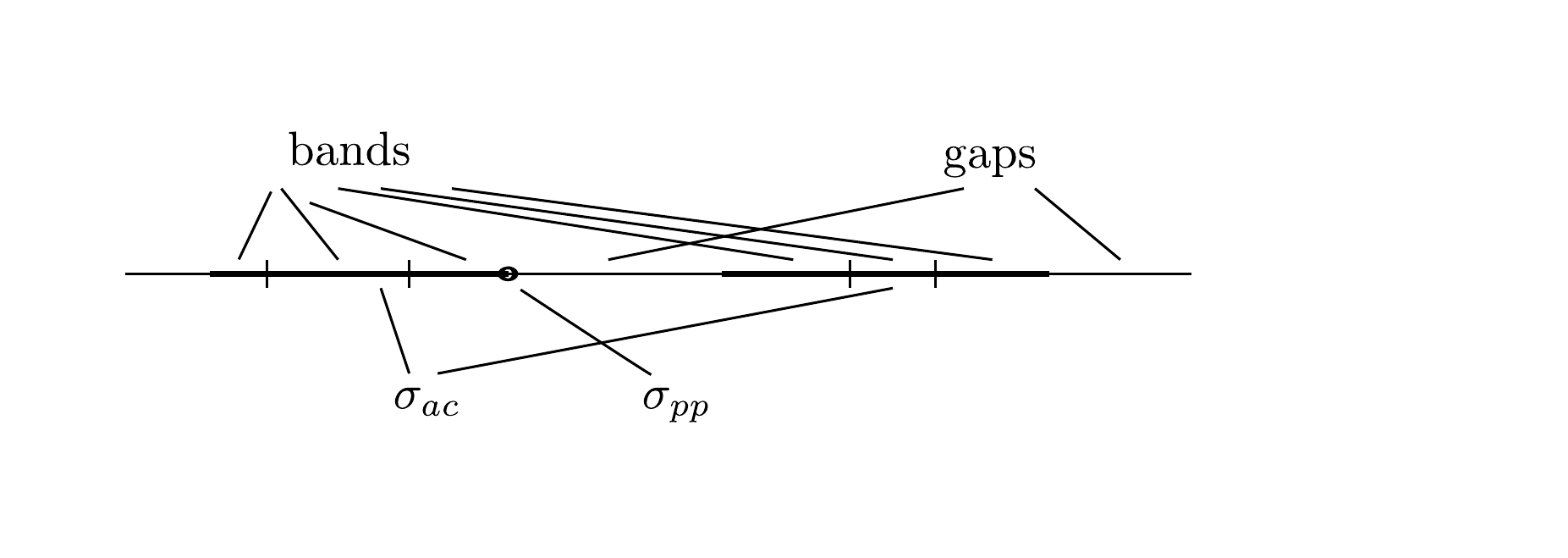}
\caption{The bold segments are the bands of $\sigma(H^{per})$. Each of them is split into three touching bands of $\sigma(H)$. One eigenvalue at the end of a band is also shown.}\label{F:triple}
\end{figure}
\begin{proof}
The first claim about emptiness of the singular continuous spectrum of Schr\"odinger operator $H$ is rather well-known\footnote{It goes back to the famous L.~Thomas' absolute continuity theorem \cite{Thomas}. See, e.g., \cite[Theorem 4.4.1]{Berk_Kuch}, or \cite{Reed_Simon_4} and references therein.}.

Let first $\lambda\notin\Sigma^D$.
Then, according to Lemma \ref{L:rootfunction}, $(\theta,\lambda)$ is in the dispersion surface of $H$ iff $\eta(\lambda)$ is a root of (\ref{E:eta_equation}) for this $\theta$.
In other words, due to (\ref{E:eta_discriminant}), $0.5D(\lambda)$ must be one of the three values of $F(\theta)$.
Hence, this is equivalent to $D(\lambda)\in 2F(\theta)$, or $\lambda\in D^{-1}(2F(\theta))$.

If $\lambda\in\Sigma^D$, then according to Lemma \ref{L:simple_loop_state}, $(\theta,\lambda)$ is in the dispersion surface for any $\theta$ from the Brillouin zone.

This proves the second statement of the theorem.

According to the Lemmas \ref{L:KuchPost} and \ref{L:simple_loop_state} we have
\begin{equation} \label{E:Dirichlet_spectrum}
 \Sigma^D\subset\sigma(H), \Sigma^D\subset\sigma(H^{per})
\end{equation}
and
$$ \sigma(H^{per})=\{\lambda\in\mathbb{R}\big|\left|D(\lambda)\right|\leq 2\}.$$
For $\lambda\notin\Sigma^D$, $\lambda$ is in the spectrum of $H$ iff $\eta(\lambda)$ is a root of equation (\ref{E:eta_equation}) for some $\theta$. Proposition  \ref{L:F_property} below shows, in particular, that all roots of equation (\ref{E:eta_equation}) belong to $[-1,1]$ and cover this interval. Thus, $\lambda$ is in the spectrum of $H$ iff $|\eta(\lambda)|\leq 1$.
Since $D(\lambda)=2\eta(\lambda)$, this means that $\sigma(H)\setminus \Sigma^D=\sigma(H^{per})\setminus \Sigma^D$ and, by closure,
$\sigma(H)=\sigma(H^{per})$.

The same Proposition \ref{L:F_property} shows that the graph of the triple-valued function $F(\theta)$ does not have any flat branches outside the set of values $\Sigma^D$.
Thus, the spectrum of $H$ is absolutely continuous outside $\Sigma^D$.
This argument, together with Lemma \ref{L:simple_loop_state} finishes the proof of the statements (3) and (5) of the theorem.

From statement (2), the dispersion relation of $H$ consists of the variety $\lambda=D^{-1}(2F_j(\theta)), j\in \overline{1,3}$,
and collection of flat branches $\lambda\in\Sigma^D$ located at some edges of spectral bands.

According to Lemma \ref{L:KuchPost}, function $D(\lambda)$ is a monotonic homeomorphism from each spectral band of the Hill operator onto $[-2,2]$.
Besides, the ranges of functions $2F_1, 2F_2$ and $2F_3$ are all in $[-2,2]$.
Thus, part of the spectrum $\sigma(H^{per})$ that corresponds to each band of Hill operator $H^{per}$ coincides as a set with the part of the absolutely continuous spectrum $\sigma_{ac}(H)$ that consists of three bands.
In another words, operator $H$ has "three times more" non-flat bands than Hill operator $H^{per}$ does.

The function $D^{-1}$ is multiple-valued, thus producing infinitely many bands for any $2F_j, j=\overline{1,3}$.
Two such spectral bands (pre-images of the same $2F_j(B)$) clearly cannot overlap, because function $D(\lambda)$ is monotonic on each band.
Two spectral bands which are pre-images of $2F_j(B)$ and $2F_i(B)$ for different $i,j\in\overline{1,3}$ also cannot overlap because the ranges of functions $F_j, j=\overline{1,3}$ belong to $[-1,1]$ and do not overlap by Proposition \ref{L:F_property} below.
One should notice that although spectral bands do not overlap, they still can touch each other, and as we will see below this is indeed the case.
This can happen at points $(\theta,\lambda)$ such that $D(\lambda)=\pm 2$ or $\pm 2/3$. This proves the statement (4) of the theorem.

Since the spectra of $H$ and $H^{per}$ coincide as sets, we get the last statement of the theorem.
\end{proof}

\begin{corollary}\label{Cor}\indent
\begin{enumerate}
\item Unless the potential $q_0$ is constant, the spectrum $\sigma(H)$ has at least one gap.
\item For a \emph{generic} smooth potential $q_0$, all possible gaps in $\sigma(H)$ are open.
\end{enumerate}
\end{corollary}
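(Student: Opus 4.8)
The plan is to reduce both statements to classical one-dimensional facts. By Theorem~\ref{T:main_graphyne}, parts (3) and (6), the set $\sigma(H)$ coincides with $\sigma(H^{per})=\{\lambda\in\mathbb R:|D(\lambda)|\le 2\}$, so the gaps of $\sigma(H)$ are precisely the spectral gaps of the Hill operator $H^{per}$ whose potential is the periodic extension of the \emph{even} function $q_0$. Thus statement (1) is equivalent to the assertion that $\sigma(H^{per})$ has at least one gap unless $q_0$ is constant, and statement (2) is equivalent to the assertion that for a generic even $C^\infty$ potential $q_0$ \emph{every} gap of $H^{per}$ is open (non-degenerate).

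Statement (1) is then exactly Borg's inverse spectral theorem: a Hill operator all of whose gaps are closed (equivalently, whose spectrum is a half-line) has an almost everywhere constant potential; see, e.g., \cite{Eastham} and the references in \cite{Reed_Simon_4}. Since $q_0$ is assumed non-constant, some gap of $\sigma(H^{per})$, hence of $\sigma(H)$, is open. For statement (2) I would argue by Baire category in the Fréchet space $\mathcal E$ of real $1$-periodic $C^\infty$ functions even about $x=\tfrac12$; this is a complete metric space, hence a Baire space, and ``generic'' is understood in the usual sense of a dense $G_\delta$. For each $n\ge1$ let $\mathcal O_n\subset\mathcal E$ be the set of potentials for which the $n$-th gap of $H^{per}$ is open. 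Because the periodic and anti-periodic eigenvalues $\{a_k\},\{b_k\}$ of Lemma~\ref{L:KuchPost}(3) depend continuously on $q_0$, a strict inequality between the two edges of a gap persists under small perturbations, so each $\mathcal O_n$ is open; and $\bigcap_{n\ge1}\mathcal O_n$ is exactly the set of even $C^\infty$ potentials all of whose gaps are open. It therefore suffices to prove that each $\mathcal O_n$ is dense.

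To prove density of $\mathcal O_n$, start from $q_0\in\mathcal E$ whose $n$-th gap is closed, so the corresponding gap edge $\lambda^\ast$ is a double periodic (or anti-periodic) eigenvalue of $H^{per}$; then the full solution space of the eigenequation at $\lambda^\ast$ consists of periodic (resp.\ anti-periodic) functions. Since $q_0$ is even about $0$ after periodization, this two-dimensional space splits into a one-dimensional even part spanned by some $\psi_e$ with $\psi_e'(0)=0$, and a one-dimensional odd part spanned by some $\psi_o$ with $\psi_o(0)=0$. Perturbing $q_0$ to $q_0+\varepsilon r$ with $r\in\mathcal E$, first-order perturbation theory for the double eigenvalue $\lambda^\ast$ is governed by the symmetric $2\times2$ matrix $\big(\int_0^1 r\,\psi_i\psi_j\big)_{i,j\in\{e,o\}}$, whose off-diagonal entry $\int_0^1 r\,\psi_e\psi_o$ vanishes because $r\psi_e\psi_o$ is $1$-periodic and odd about $0$; hence $\lambda^\ast$ splits into two simple eigenvalues and the $n$-th gap opens as soon as $\int_0^1 r\,\psi_e^2/\|\psi_e\|^2\neq\int_0^1 r\,\psi_o^2/\|\psi_o\|^2$. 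The function $\psi_e^2-\psi_o^2$ is even and not identically zero — otherwise $\psi_e(0)^2=\psi_e(0)^2-\psi_o(0)^2=0$, which with $\psi_e'(0)=0$ would force $\psi_e\equiv0$ — so an even $r$ realizing this inequality exists, and then $q_0+\varepsilon r\in\mathcal O_n$ for all sufficiently small $\varepsilon\neq0$. Letting $\varepsilon\to0$ gives density of $\mathcal O_n$, and Baire's theorem yields the desired residual set. The step requiring the most care is precisely this last one: one must keep the gap-opening perturbation inside the class of \emph{even} potentials and check that the first-order splitting coefficient cannot be forced to vanish, which is where the elementary observation $\psi_e\not\equiv0$ is used; everything else (Borg's theorem, continuity of the Hill discriminant and its (anti-)periodic eigenvalues in the potential, and the Baire category theorem) is standard.
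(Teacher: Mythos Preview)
Your argument for part~(1) is exactly the paper's: reduce via Theorem~\ref{T:main_graphyne}(6) to the Hill operator and invoke Borg's theorem. For part~(2) the paper simply cites Simon's genericity result \cite{Simon_generic} and says nothing further, whereas you run an explicit Baire category argument inside the Fr\'echet space of \emph{even} smooth periodic potentials. This is a genuine difference, and in fact your route is the more careful one: Simon's theorem, as stated, gives a residual set in the full space of $C^\infty$ periodic potentials, and a dense $G_\delta$ in an ambient space need not intersect a closed subspace in a dense $G_\delta$, so the restriction to even potentials is not entirely automatic. Your first-order perturbation computation handles this cleanly---the symmetry forces the off-diagonal matrix element $\int_0^1 r\,\psi_e\psi_o$ to vanish, so the even/odd sectors decouple and the splitting is governed by $\int r(\psi_e^2/\|\psi_e\|^2-\psi_o^2/\|\psi_o\|^2)$, which you correctly show can be made nonzero by an even $r$ because $\psi_e(0)\neq 0$. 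The price you pay is a longer argument; what you gain is that the genericity statement is proved in the precise class of potentials the paper actually uses. One small remark: you might note explicitly that $r$ can be taken smooth (e.g.\ $r=\psi_e^2/\|\psi_e\|^2-\psi_o^2/\|\psi_o\|^2$ itself works, since $\psi_e,\psi_o$ are $C^\infty$ when $q_0$ is), so that $q_0+\varepsilon r$ stays in $\mathcal E$.
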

\begin{proof}
The first claim of the Corollary follows from the last statement of Theorem \ref{T:main_graphyne} and Borg's theorem \cite{Borg}. Similarly, the second claim follows from Simon's genericity result \cite{Simon_generic} instead of Borg's theorem.
\end{proof}

Graphene has captured physicists' interest because of its unusual electronic properties.
These properties are caused by occurrence of so-called \emph{conical singularities} or \emph{Dirac points}.
Roughly speaking, Dirac points are points where two spectral bands touch and locally form a cone (also known as \emph{Dirac cone}).
One is interested in conical singularities that are stable under small perturbation of the potential not breaking the symmetry.

In the next theorem, we will take a closer look at the spectral bands of the operator $H$.
Moreover, we will specify all the conical singularities in the Brillouin zone $B$ and describe how the spectral bands behave near these points.

In what follows, we will use the notation
$$\theta_0:=\arccos(-1/3).$$

\begin{theorem} \label{T:conical_singularity}
\indent
\begin{enumerate}
   \item In the free case, i.e., when the potential $q_0$ is equal to zero, the Bloch variety of $H$ has conical singularities at the following points:
      \begin{enumerate}[i)]
          \item $(\theta,\lambda)=(0,0,(2(k+1)\pi)^2)$, at which $D(\lambda)=2$,
          \item $(\theta,\lambda)=(0,\pm\pi,((2k+1)\pi)^2)$, at which $D(\lambda)=-2$,
          \item $(\theta,\lambda)=(\pm\theta_0,0,(\theta_0+2k\pi)^2)$, at which $D(\lambda)=-2/3$,
          \item $(\theta,\lambda)=(\pm\theta_0,\pm\pi,(2\pi-\theta_0+2k\pi)^2)$, at which $D(\lambda)=2/3$,
          for $k=0,1,2,\ldots$
      \end{enumerate}

   \item When we turn on a small potential $q_0\neq 0$, the conical singularities corresponding to $|D(\lambda)|=2$ will generically split into two smooth branches,
   opening a gap.
   The conical singularities that occur when $|D(\lambda)|=2/3$ are stable under small perturbation by a potential $q$ of the type considered.
\end{enumerate}
\end{theorem}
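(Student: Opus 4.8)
The plan is to read the conical points directly off the dispersion relation of Theorem~\ref{T:main_graphyne}. The flat branches $\lambda\in\Sigma^D$ may be ignored at once: being $\theta$-independent they meet the other sheets only transversally and form no cone. The remaining Bloch variety is $\{(\theta,\lambda):P(\eta(\lambda),\theta)=0\}$, with $P(x,\theta):=9x^3-x-(\cos\theta_1+1)(3x+\cos\theta_2)$ and $\eta(\lambda)=\tfrac12D(\lambda)$. The map $\lambda\mapsto x=\eta(\lambda)$ is a local real-analytic diffeomorphism wherever $|\eta(\lambda)|<1$ (band interior, $D'\neq0$) and folds quadratically exactly at the band edges at which two bands meet, i.e.\ where $D=\pm2$ and $D'=0$ --- in the free case, every interior band edge (Lemma~\ref{L:KuchPost}); the bottom edge $\lambda=0$ borders only one band, so $D$ does not fold there. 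Hence a conical singularity of the Bloch variety can occur only (a) over a $\theta_*$ where $P(\cdot,\theta_*)$ has a \emph{double} root $x_*$ with $|x_*|<1$ --- two cubic sheets colliding, a cone already in $(\theta,x)$, carried verbatim to $(\theta,\lambda)$; or (b) over a $\theta_*$ where a \emph{simple} root of $P(\cdot,\theta_*)$ equals $\pm1$ and $\lambda_*$ is the adjacent two-band edge, so that a single smooth cubic sheet pulled back through the fold of $D^{-1}$ becomes a cone. The proof splits into these two cases.

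\emph{Case (a): the $|D|=2/3$ points (iii)--(iv).} Imposing $\partial_xP=0$ gives $\cos\theta_1+1=9x^2-\tfrac13$; substituting into $P=0$ then gives $\cos\theta_2=-54x^3/(27x^2-1)$, and $|\cos\theta_2|\le1$ forces, for $|x|$ in the admissible range, $(3x-1)^2(6x+1)\le0$ (or its mirror), hence $|x_*|=\tfrac13$, $\cos\theta_1=-\tfrac13$ (so $\theta_1=\pm\theta_0$) and $\cos\theta_2=\mp1$. Since $\eta(\lambda)=\cos\sqrt\lambda$ in the free case, this is exactly families (iii)--(iv). For the cone shape I would Taylor-expand $P$ at $(x_*,\theta_*)$: $\partial_xP$ vanishes at the double root while $\partial_{xx}P=54x_*\neq0$, and near $\theta_*$ the factor $\cos\theta_1+1$ varies to first order (since $\sin\theta_0\neq0$) whereas $\cos\theta_2$ sits at a nondegenerate extremum; so $P=0$ becomes to leading order $x-x_*=\ell(\theta)\pm\sqrt{Q(\theta)}$ with $\ell$ linear and $Q$ positive-definite in suitable local coordinates, i.e.\ a nondegenerate indefinite quadratic form in $(x-x_*,\theta)$ --- a genuine cone in $(\theta,x)$. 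As $\pm\tfrac23$ lies in the interior of a band of $H^{per}$, $\eta$ is a local diffeomorphism at the corresponding $\lambda_*$, so the cone transfers to a cone in $(\theta,\lambda)$.

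\emph{Case (b): the $|D|=2$ points (i)--(ii).} Here $P(1,\theta)=8-(\cos\theta_1+1)(3+\cos\theta_2)\ge0$, with equality only when both factors are maximal, i.e.\ $\theta_*=(0,0)$; likewise $P(-1,\theta)\le0$, with equality only at $\theta_*=(0,\pm\pi)$. Intersecting with the free-case two-band edges $\sqrt\lambda\in\pi\mathbb{Z}_{\ge1}$ yields families (i)--(ii) ($\lambda=0$ is correctly excluded, as above). Near $\theta_*=(0,0)$ the value $x=1$ is a \emph{simple} root of the cubic, so that branch is a smooth graph $x=g(\theta)$; using $\partial_xP|_*=20$ and $\cos\theta_i-1\approx-\theta_i^2/2$ one obtains $g(\theta)-1=-\tfrac1{20}(2\theta_1^2+\theta_2^2)+\dots$, a nondegenerate maximum, while the other two branches stay away from $x=1$ and hence from $\lambda_*$. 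Thus near $(\theta_*,\lambda_*)$ the Bloch variety is $\{D(\lambda)=2g(\theta)\}$, and with $\mu:=\sqrt\lambda-2(k+1)\pi$ and $D=2\cos\sqrt\lambda=2-\mu^2+\dots$ this reduces to $\mu^2=\tfrac1{10}(2\theta_1^2+\theta_2^2)+\dots$: a nondegenerate cone in $(\theta,\mu)$, hence in $(\theta,\lambda)$ since $\mu$ is a smooth coordinate near the positive $\lambda_*$. Family (ii) is identical, with $D=-2$ and antiperiodic edges.

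\emph{Stability, and the main obstacle.} The crucial point is that $P(x,\theta)$ does not involve $q_0$; only the vertical substitution $x=\tfrac12D(\lambda)$ sees the potential. For families (iii)--(iv) the level $\pm\tfrac23$ lies strictly inside $[-2,2]$, hence in a band interior of $H^{per}$ for \emph{every} admissible $q_0$, where $D$ is a monotone real-analytic diffeomorphism onto $(-2,2)$; the potential-independent cone of $\{P=0\}$ is therefore always carried to a cone of the Bloch variety, which proves their stability. For families (i)--(ii) one instead needs $D(\lambda_*)=\pm2$, i.e.\ $\lambda_*$ a periodic or antiperiodic eigenvalue; in the free case the relevant ones are \emph{doubly degenerate}, and it is exactly this degeneracy ($D$ reaching $\pm2$ with vanishing derivative at a point shared by two bands) that produced the cone in Case~(b). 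For a generic smooth $q_0\neq0$ these double eigenvalues split (Simon's genericity theorem \cite{Simon_generic}, exactly as in Corollary~\ref{Cor}), opening a gap of $H^{per}$; then $D'\neq0$ at each of the two now-simple edges and $\lambda=D^{-1}(2g(\theta))$ separates into two disjoint smooth sheets flanking that gap --- the cone disappears and a spectral gap of $H$ opens. I expect the main obstacle to be the two Taylor computations of Case~(b): the conical structure is invisible in each coordinate separately and surfaces only after the quadratic degeneracy of $\lambda\mapsto D(\lambda)$ at a band edge is superposed on the quadratic degeneracy of $\theta\mapsto(\cos\theta_1,\cos\theta_2)$ at a corner of the Brillouin zone, at which point one must verify that the emergent quadratic form is nondegenerate; in Case~(a) the linear tilt $\ell(\theta)$ must similarly be tracked with care, so that a genuine cone is not mistaken for a smooth sheet.
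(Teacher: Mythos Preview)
Your proof is correct and rests on the same ingredients as the paper's: Taylor expansion of the dispersion relation at the touching points to verify the nondegenerate quadratic (conical) form, the fact that $D'(\lambda)\neq0$ whenever $|D(\lambda)|<2$ for stability of the $|D|=2/3$ cones, and Simon's genericity theorem for the splitting at $|D|=2$.

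The organization differs in one useful way. The paper works directly in the variables $(\theta,\lambda)$: it parametrizes the bands explicitly as $\lambda_{6k+j}=(\arccos F_j(\theta)+2k\pi)^2$ etc., lists the touching points, and then Taylor-expands the relation $9(D/2)^3-D/2=(\cos\theta_1+1)(3D/2+\cos\theta_2)$ about each touching point, arriving (for case~(i)) at $A(\lambda-\lambda_0)^2+2\theta_1^2+\theta_2^2=o(\cdots)$ and (for case~(iii)) at an expression of the shape $(A(\lambda-\lambda_0)+b_1(\theta_1-\theta_0))^2-B(\theta_1-\theta_0)^2-\tfrac{4}{3}\theta_2^2=o(\cdots)$. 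You instead factor the problem as $(\theta,x)\mapsto(\theta,\lambda)$ via $x=\tfrac12 D(\lambda)$: first you analyze the potential-independent cubic $P(x,\theta)=0$ (finding its double roots by solving $\partial_xP=0$ and its simple roots at $x=\pm1$), and only then compose with $D^{-1}$, distinguishing the diffeomorphic regime $|x|<1$ from the quadratic fold at $x=\pm1$. This buys you a cleaner Part~2, since the persistence of the $(\theta,x)$-cone is automatic and the only question is whether the vertical map $x\mapsto\lambda$ remains a diffeomorphism (interior of a band) or folds (closed gap); the paper makes the same argument but has to say ``repeat the calculation in part~(iii)'' for the general potential. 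Both routes lead to identical quadratic forms after the change of variables $u=\eta(\lambda)-\eta(\lambda_0)\approx\tfrac12 D'(\lambda_0)(\lambda-\lambda_0)$.
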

\begin{figure}[ht!]
\includegraphics[scale=0.7]{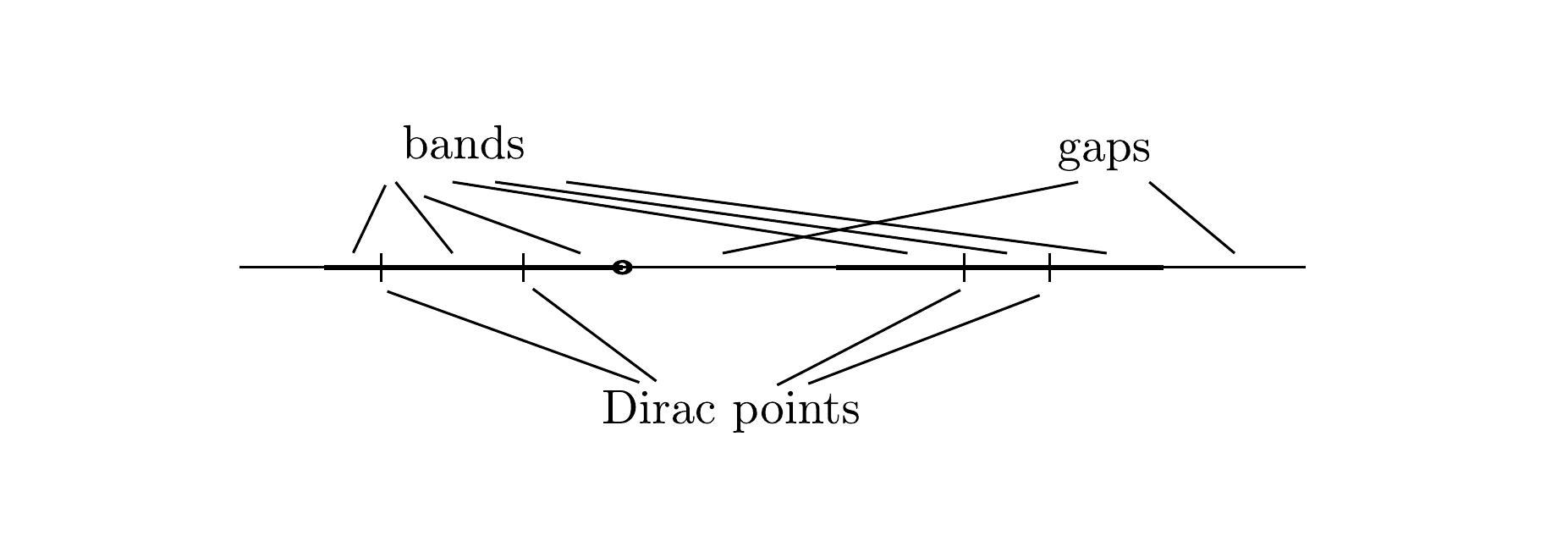}
\caption{The Dirac points are located at the points where three bands of $\sigma(H)$ touch to form a band of $\sigma(H^{per})$.}\label{F:touch}
\end{figure}
Before proving this theorem, we need to explore some properties of function $F(\theta)$.
\begin{prop}\label{L:F_property}
\indent
\begin{enumerate}
   \item Function $F(\theta)$ is real-valued for $\theta$ in the Brillouin zone B.

   \item Let $F(\theta)=(F_1(\theta),F_2(\theta),F_3(\theta))$, where $F_1(\theta)\leq F_2(\theta)\leq F_3(\theta)$ for all $\theta\in B$.
   Then the ranges of functions $F_1,F_2, F_3$ are $[-1,-1/3],[-1/3,1/3]$ and $[1/3,1]$ correspondingly.

   \item Function $F_1$ attains its maximal value at $(\theta_1,0)$ for $\theta_1\in [-\pi,-\theta_0]\cup[\theta_0,\pi]$
   or $(\pm\pi,\theta_2)$ for $\theta_2\in [-\pi,\pi]$
   and minimal value at $\theta=(0,\pm\pi).$

   Function $F_2$ attains its maximal value at $(\theta_1,\pm\pi)$ for $\theta_1\in[-\theta_0,\theta_0]$
   and minimal value at $\theta=(\theta_1,0)$ for  $\theta_1\in[-\theta_0,\theta_0].$

   Function $F_3$ attains its maximal value at $(0,0)$ and minimal value at $(\theta_1,\pm\pi)$
   for $\theta_1\in [-\pi,-\theta_0]\cup[\theta_0,\pi]$ or $(\pm\pi,\theta_2)$ for $\theta_2\in [-\pi,\pi].$

   \item The linear level sets of function $F(\theta)$ inside $B$ are
      \begin{enumerate}[]
          \item $\{(\pm\pi,\theta_2), \theta_2\in [-\pi,\pi]\},$
          \item $\{(\theta_1,\pm\pi),\theta_1\in [-\pi,\pi]\},$
          \item $\{(\theta_1,\pm\pi/2),\theta_1\in [-\pi,\pi]\},$
          \item $\{(\theta_1,0),\theta_1\in [-\pi,\pi]\}$.
      \end{enumerate}
Function $F(\theta)$ does not have any flat branches.
\end{enumerate}
\end{prop}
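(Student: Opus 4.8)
The plan is to reduce everything to an elementary analysis of the cubic in \eqref{E:eta_equation}. Write $p:=1+\cos\theta_1\in[0,2]$ and $q:=\cos\theta_2\in[-1,1]$, so that \eqref{E:eta_equation} becomes $g_{p,q}(x):=9x^3-(1+3p)x-pq=0$. Part (1) amounts to showing that the discriminant of this cubic, which up to the positive factor $9$ equals $4(1+3p)^3-243\,p^2q^2$, is nonnegative on $[0,2]\times[-1,1]$. Since $q^2\le 1$, it suffices to check $\psi(p):=4(1+3p)^3-243\,p^2\ge 0$ on $[0,2]$: this is a one–variable calculus exercise, $\psi'$ vanishing only at $p=1/6$ and $p=2/3$, with $\psi(0),\psi(1/6),\psi(2)>0$ and $\psi(2/3)=0$, so $\psi\ge 0$ with equality exactly at $p=2/3$. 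Thus all three roots are real; the equality locus $p=2/3,\ q=\pm 1$ (i.e. $\theta_1=\pm\theta_0$, $\theta_2\in\{0,\pm\pi\}$) is precisely where the cubic acquires a double root, matching the conical points of Theorem \ref{T:conical_singularity}.

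For part (2) I would evaluate $g_{p,q}$ at the four candidate endpoints: $g_{p,q}(-1)=p(3-q)-8\le 0$, $g_{p,q}(-1/3)=p(1-q)\ge 0$, $g_{p,q}(1/3)=-p(1+q)\le 0$, and $g_{p,q}(1)=8-p(3+q)\ge 0$, using only $-1\le q\le 1$ and $0\le p\le 2$. Combined with part (1), the intermediate value theorem forces the three real roots to lie one in each of $[-1,-1/3]$, $[-1/3,1/3]$, $[1/3,1]$; hence $F_1\le-1/3\le F_2\le 1/3\le F_3$ always, and the corresponding spectral bands cannot overlap. To see that each $F_j$ sweeps out its whole interval, I would switch viewpoint: for a fixed target $x_0$, solve $9x_0^3-x_0=p(3x_0+q)$ for $(p,q)$. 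The image of $(p,q)\mapsto p(3x_0+q)$ over $[0,2]\times[-1,1]$ is the closed interval with endpoints $\min(0,6x_0-2)$ and $\max(0,6x_0+2)$, and a short sign analysis — using $9x^3-7x-2=(x-1)(3x+1)(3x+2)$ and $9x^3-7x+2=(x+1)(3x-1)(3x-2)$ — shows $9x_0^3-x_0$ belongs to this interval iff $x_0\in[-1,1]$. Since the ranges of $F_1,F_2,F_3$ are closed (continuity of roots plus compactness of $B$) and confined to the three subintervals, their union being $[-1,1]$ forces each to be exactly its subinterval.

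For part (3) I would pin down where the boundary values are attained. Because $F_1\in[-1,-1/3]$ and $g_{p,q}(-1/3)=p(1-q)\ge 0$, one has $F_1=-1/3$ exactly when $g_{p,q}(-1/3)=0$ \emph{and} $-1/3$ is the least root; the first condition forces $p=0$ or $q=1$, and in the case $q=1$ the factorization $g_{p,1}(x)=(3x+1)(3x^2-x-p)$ shows the least root is $-1/3$ precisely when $p\le 2/3$. Translating via $p=1+\cos\theta_1$ and $\theta_0=\arccos(-1/3)$ gives exactly the stated maximum set for $F_1$; the minimum $F_1=-1$ occurs exactly when $g_{p,q}(-1)=0$, i.e. $p=2,\ q=-1$. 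The statements for $F_3$ are symmetric via $g_{p,-1}(x)=(3x-1)(3x^2+x-p)$, and the ones for $F_2$ come from the same two factorizations (for $q=\pm1$, $p\ge 2/3$ the double root at $\mp1/3$ is the middle one). For part (4), a value $c$ is a root at $(\theta_1,\theta_2)$ iff $(1+\cos\theta_1)(3c+\cos\theta_2)=c(9c^2-1)$; this equation can contain a coordinate line only if $1+\cos\theta_1\equiv 0$ (forcing $\theta_1=\pm\pi$ and $c(9c^2-1)=0$) or $3c+\cos\theta_2\equiv 0$ (forcing $c\in\{0,\pm1/3\}$ and $\theta_2\in\{0,\pm\pi/2,\pm\pi\}$ accordingly), which yields exactly the four listed segments; and $F_j\equiv c$ on an open set is impossible since it would force the non-constant real–analytic function $(1+\cos\theta_1)(3c+\cos\theta_2)$ to be constant there, so there are no flat branches.

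I expect the main obstacle to be the bookkeeping in part (3): it is not enough to show the endpoints $\pm1,\pm1/3$ are attained — one must identify \emph{exactly which} $(\theta_1,\theta_2)$ realize each extremum and \emph{which} of the three ordered roots reaches it, and this requires carefully tracking how the root ordering in $g_{p,\pm1}$ changes as $p$ crosses $2/3$. The discriminant inequality of part (1) is likewise a genuine (if routine) computation rather than a soft argument, and the "covering'' half of part (2) depends on the slightly non-obvious idea of solving for the parameters $(p,q)$ instead of for $x$.
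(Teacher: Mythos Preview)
Your proposal is correct and takes a genuinely different route from the paper's proof. The paper argues parts (1) and (2) \emph{geometrically}: writing the equation as $9x^3-x=a(3x+b)$, it treats the right-hand side as a family of lines $l_{a,b}$ and shows, by rotating and sliding these lines, that each $l_{a,b}$ meets the three arcs $I_1,I_2,I_3$ of the cubic $y=9x^3-x$ over $[-1,-1/3]$, $[-1/3,1/3]$, $[1/3,1]$. You instead work \emph{algebraically}: you verify reality of the roots by a direct discriminant estimate, locate them by the four sign evaluations $g_{p,q}(\pm1),\,g_{p,q}(\pm1/3)$, and handle the extrema in part (3) via the clean factorizations $g_{p,\pm1}(x)=(3x\pm1)(3x^2\mp x-p)$. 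Your approach is more self-contained and immediately checkable without pictures, and the factorizations make the threshold $p=2/3$ (i.e.\ $\theta_1=\pm\theta_0$) transparent; the paper's approach is more visual and makes the ``one root per arc'' structure intuitive at a glance. Both reach the same conclusions with comparable effort.

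One small gap to patch: in part (4) you only argue that no \emph{coordinate} line other than the four listed ones can be a level set, whereas the statement (and the paper's proof) rules out \emph{all} affine lines $p_1^0\theta_1+p_2^0\theta_2=2k_0\pi$. The fix is easy with your own setup: along any line on which $\theta_1$ is nonconstant, the identity $(1+\cos\theta_1)(3c+\cos\theta_2)=c(9c^2-1)$ evaluated at $\theta_1=\pm\pi$ forces $c(9c^2-1)=0$, hence $c\in\{0,\pm1/3\}$; then $(1+\cos\theta_1)(3c+\cos\theta_2)\equiv 0$ with $1+\cos\theta_1>0$ on $(-\pi,\pi)$ forces $\cos\theta_2\equiv -3c$, so the line must be horizontal. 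This recovers exactly the listed segments and closes the argument.
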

Proof of this proposition will be given in Section \ref{S:Prop}.

\emph{Proof of Theorem \ref{T:conical_singularity}}

1. In the free case, $D(\lambda)=2\cos\sqrt{\lambda}$ (see, e.g., \cite{Eastham}), so we have
\begin{equation}\label{E:disp_freecase}
\cos\sqrt{\lambda}=F_j(\theta), j=1,2,3.
\end{equation}

As it was proven in Theorem \ref{T:main_graphyne}, spectral bands do not overlap.
It is still possible that these bands touch each other at their edges.
We will study now whether this indeed happens and prove that at these points the spectral bands have conical form.

According to (\ref{E:disp_freecase}), all non-flat spectral bands are
$$\lambda_{6k+j}=(\arccos(F_j(\theta))+2k\pi)^2, \lambda_{6k+3+j}=(2\pi-\arccos(F_j(\theta))+2k\pi)^2,$$
for all $j=\overline{1,3}, k=0,1,2,\ldots$

Thus for $k=0,1,2,\ldots$, we have
\begin{enumerate}[i)]
 \item Bands $\lambda_{6k+4}, \lambda_{6k+7}$ touch each other at $(0,0,(2(k+1)\pi)^2)$, for which $D(\lambda)=2$.

 \item Bands $\lambda_{6k+3}, \lambda_{6k+6}$ touch each other at $(0,\pm\pi,((2k+1)\pi)^2)$, for which $D(\lambda)=-2$.

 \item Bands $\lambda_{6k+2}, \lambda_{6k+3}$ touch each other at $(\pm\theta_0,0,(\theta_0+2k\pi)^2)$ while bands
 $\lambda_{6k+5}, \lambda_{6k+6}$ touch each other at $(\pm\theta_0,0,(-\theta_0+(2k+2)\pi)^2)$.
 At these points $D(\lambda)=-2/3$.

 \item Bands $\lambda_{6k+1}, \lambda_{6k+2}$ touch each other at $(\pm\theta_0,\pm\pi,(\pi-\theta_0+2k\pi)^2)$
 while $\lambda_{6k+4}, \lambda_{6k+5}$ touch each other at $(\pm\theta_0,\pm\pi,(\pi+\theta_0+2k\pi)^2)$.
 At these points $D(\lambda)=2/3$.
\end{enumerate}

Let us now look at the structure near the touching points.
One needs to deal with each case above separately.
Since the argument we use for i) and iii) are similar to those needed for ii) and iv), for the sake of brevity, we will consider only cases i) and iii).

i) Let $(\theta,\lambda)=(0,0,\lambda_0)$ where $\lambda_0=(2(k+1)\pi)^2, k\in\mathbb{N}$. Then
\begin{equation}
\frac{D(\lambda)}{2}= \cos{\sqrt{\lambda}}= 1+a_2(\lambda-\lambda_0)^2+o((\lambda-\lambda_0)^2),
\label{E:D_lamda_freecase}
\end{equation}
for $\lambda\rightarrow\lambda_0$ where $a_2=-1/8\lambda_0< 0,$
\begin{equation}
\cos\theta_1=1-\frac{\theta_1^2}{2}+o(\theta_1^2),\text{ for }\theta_1\rightarrow 0
\label{E:cos1_freecase}
\end{equation}
and
\begin{equation}
\label{E:cos2_freecase}
\cos\theta_2=1-\frac{\theta_2^2}{2}+o(\theta_2^2),\text{ for }\theta_2\rightarrow 0.
\end{equation}

Since $0.5 D(\lambda)$ is a root of the equation (\ref{E:eta_equation}), we have
\begin{equation} \label{E:discriminant_equation}
9\left(\frac{D(\lambda)}{2}\right)^3-\frac{D(\lambda)}{2}=(\cos\theta_1+1)\left(3\frac{D(\lambda)}{2}+\cos\theta_2\right).
\end{equation}

Plugging expressions (\ref{E:D_lamda_freecase}), (\ref{E:cos1_freecase}) and (\ref{E:cos2_freecase}) into (\ref{E:discriminant_equation}) and simplify the expression, one obtains the following formula:
\begin{equation}
\label{E:cone_freecase} A(\lambda-\lambda_0)^2+2\theta_1^2+\theta_2^2=o(\theta_1^2)+o(\theta_2^2)+o((\lambda-\lambda_0)^2),
\end{equation}
for $(\theta,\lambda)\rightarrow(0,0,\lambda_0)$ where $ A=20 a_2< 0$.
Equation (\ref{E:cone_freecase}) shows that the spectral bands touching at the point $(\theta,\lambda)=(0,0,\lambda_0)$ have the conical form.
In other words, the Bloch variety of operator $H$ has conical singularity at $(\theta,\lambda)=(0,0,\lambda_0).$

iii) Let now $(\theta,\lambda)=(\theta_0,0,\lambda_0)$ where $\lambda_0=(\theta_0+2k\pi)^2, k=0,1,2,\ldots$.
We have
$$\frac{D(\lambda)}{2}= -\frac{1}{3}+a_1(\lambda-\lambda_0)+a_2(\lambda-\lambda_0)^2+o((\lambda-\lambda_0)^2),$$
$$\text{ for } \lambda\rightarrow\lambda_0, \text{ where } a_1=0.5D'(\lambda_0),$$

$$ \cos\theta_1=-\frac{1}{3}+b_1(\theta_1-\theta_0)+b_2(\theta_1-\theta_0)^2+o((\theta_1-\theta_0)^2),$$
$$\text{ for } \theta_1\rightarrow\theta_0 \text{ where } b_1=-\sin\theta_0,$$
and
$$\cos\theta_2=1-\frac{\theta_2^2}{2}+o(\theta_2^2), \text{ for } \theta_2\rightarrow 0.$$

Analogously to part i), we substitute these formulas into (\ref{E:discriminant_equation}) to obtain
$$\frac{(A(\lambda-\lambda_0)+b_1(\theta_1-\theta_0))^2}{4}-\frac{B(\theta_1-\theta_0)^2}{4}-\frac{\theta_2^2}{3}=$$
$$=o(\theta_2^2)+o((\theta_1-\theta_0)^2)+o((\lambda-\lambda_0)^2), $$
for $(\theta,\lambda)\rightarrow (\theta_0,0,\lambda_0)$ where $A=6a_1$ and $B=b_1^2>0$.
Since $D'(\lambda)\neq 0$ if $D(\lambda)\neq\pm 2$ according to Lemma \ref{L:KuchPost}, $D'(\lambda_0)\neq 0$ and so $A=6 a_1=3D'(\lambda_0)\neq 0$.
Thus again two spectral bands which touch at the point $(\theta_0,0,\lambda_0)$ have conical form near that point.
The Bloch variety of operator $H$ has conical singularity at $(\theta_0,0,\lambda_0).$

The same argument applies to $(\theta,\lambda)=(-\theta_0,0,\lambda_0)$.
\begin{figure}[ht!]
\includegraphics[scale=0.21]{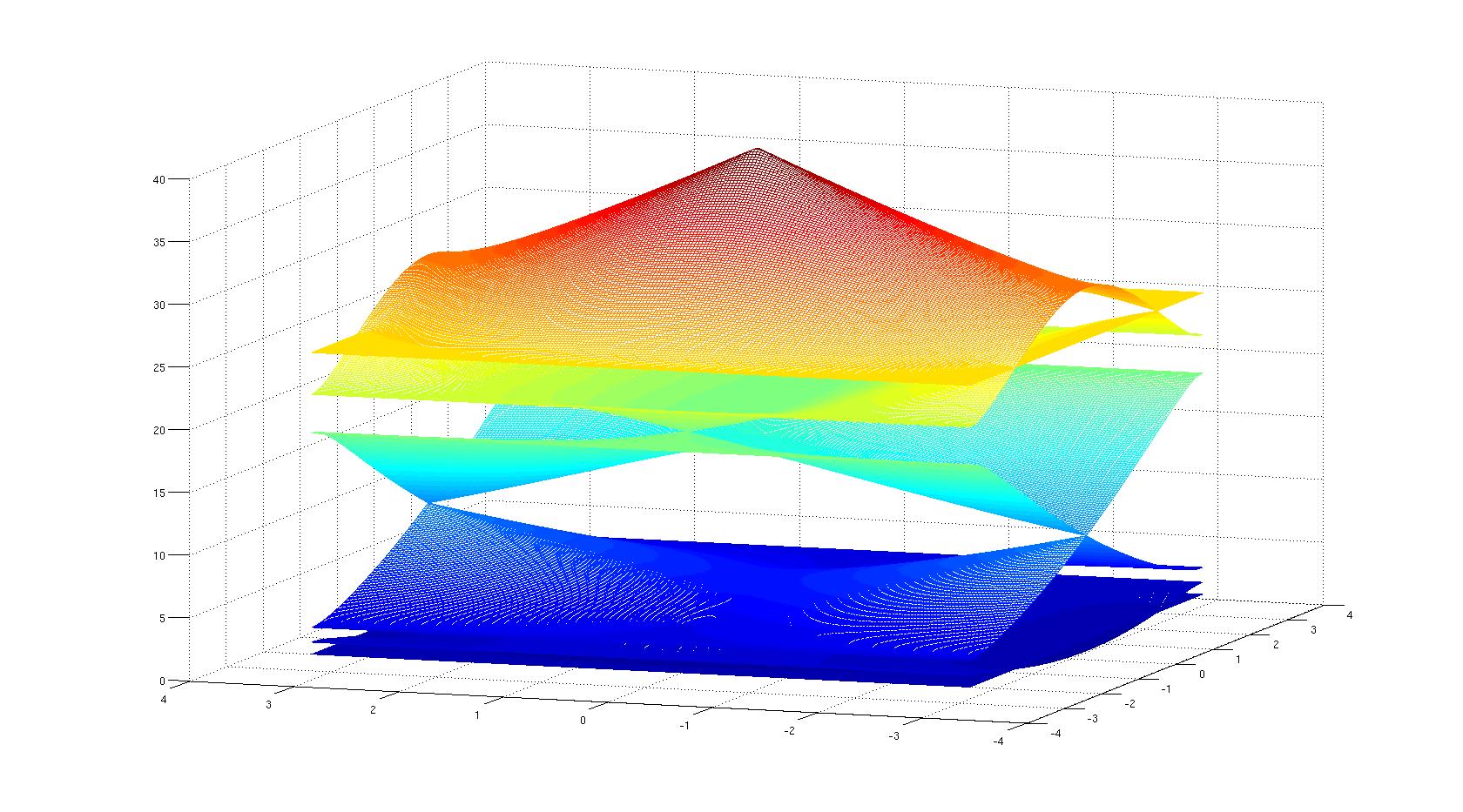}
\caption{Bloch variety of operator $H$ in the free case}\label{F:g}
\end{figure}

2. When we turn on a small potential, since $D'(\lambda_0)\neq 0$ for $\lambda_0=D^{-1}(\pm2/3)$, we can repeat the calculation in part iii),
and find conical singularities at $\theta=(\pm\theta_0,0)$ or $(\pm\theta_0,\pm\pi)$ (at these points $|D(\lambda)|=2/3$).

As it was shown in \cite{Simon_generic}, for almost every $C^{\infty}$ periodic potential $q_0$ on $\mathbb{R}$, all the gaps Hill operator $H^{per}$ open
(at the edges of the gaps $D(\lambda)=\pm 2$).
Since the spectrum $\sigma(H)$ has gaps iff $\sigma(H^{per})$ has gaps, this implies that all conical singularities in the free case with $D(\lambda)=\pm 2$ will generically split into two smooth branches and open a gap when we perturb the potential a little bit.
\qed

\section{Proof of Proposition \ref{L:F_property}}\label{S:Prop}

\begin{proof}

\begin{figure}[b!]
\includegraphics [scale=0.75]{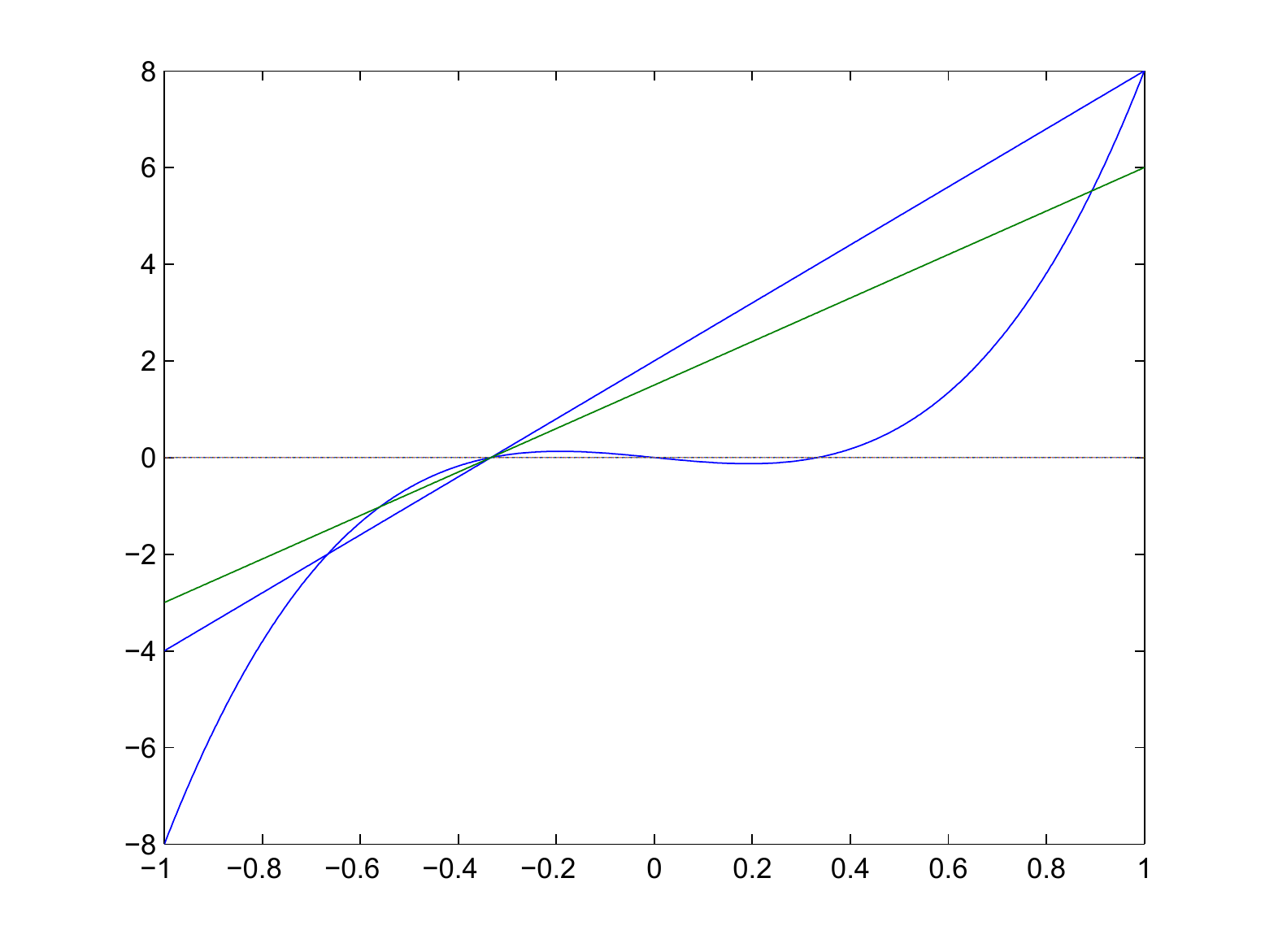}
\caption{The line $l_{1,1}$ lies between $l_{0,1}$ and $l_{2,1}$ and thus has nonempty intersections with $I_1, I_2$ and $I_3$}
\end{figure}
1. We will use the following notations $a:=\cos\theta_1+1$, $b:=\cos\theta_2$, then $a\in[0,2], b\in [-1,1]$.
Equation (\ref{E:eta_equation}) becomes:
\begin{equation}\label{E:x_eta}
9x^3-x=a(3x+b).
\end{equation}

Denote $l_{a,b}$ as the graph of function $y=a(3x+b)$ and $I_1, I_2, I_3$ as parts of the graph of function $y=9 x^3-x$ restricted to $[-1,-1/3]$, $[-1/3,1/3]$ and $[1/3,1]$ correspondingly, $I:=I_1\cup I_2\cup I_3$. Then $I_1, I_2, I_3$ are all connected.

One can notice that the line $l_{0,1}$ intersects with $I_1, I_2, I_3$ when $x=-1/3,0,1/3$ correspondingly while the line $l_{2,1}$ intersects with $I_1, I_2, I_3$ when $x=-2/3, 1/3, 1$ respectively.
When slope $3a$ of the line $l_{a,1}$ changes from 0 to 6, the line $l_{a,1}$ rotates from the line $l_{0,1}$ to the line $l_{2,1}$ around point $(1,0)$.
Thus, the line $l_{a,1}$ intersects with each of $I_1, I_2, I_3$ for all values $a\in[0,2]$.
Applying the same argument for the line $l_{a,-1}$, we also have that the line $l_{a,-1}$ intersects with each of $I_1, I_2, I_3$ for all $a\in[0,2]$.

Besides, for $a\in[0,2], b\in (-1,1)$, three lines $l_{a,1},l_{a,b}$ and $l_{a,-1}$ are parallel and $l_{a,b}$ lies between the other two.
Both lines $l_{a,1}$ and $l_{a,-1}$ intersect with each of $I_1, I_2, I_3$,
and so the line $l_{a,b}$ also intersects with each of $I_1, I_2, I_3$.
This means equation (\ref{E:x_eta}) has three real roots for all $a\in [0,2]$, $b\in [-1,1]$.
Thus all roots of equation (\ref{E:eta_equation}) are real and so function $F(\theta)$ has only real values for all $\theta\in B$.

\begin{figure}[h!]
\includegraphics [scale=0.75]{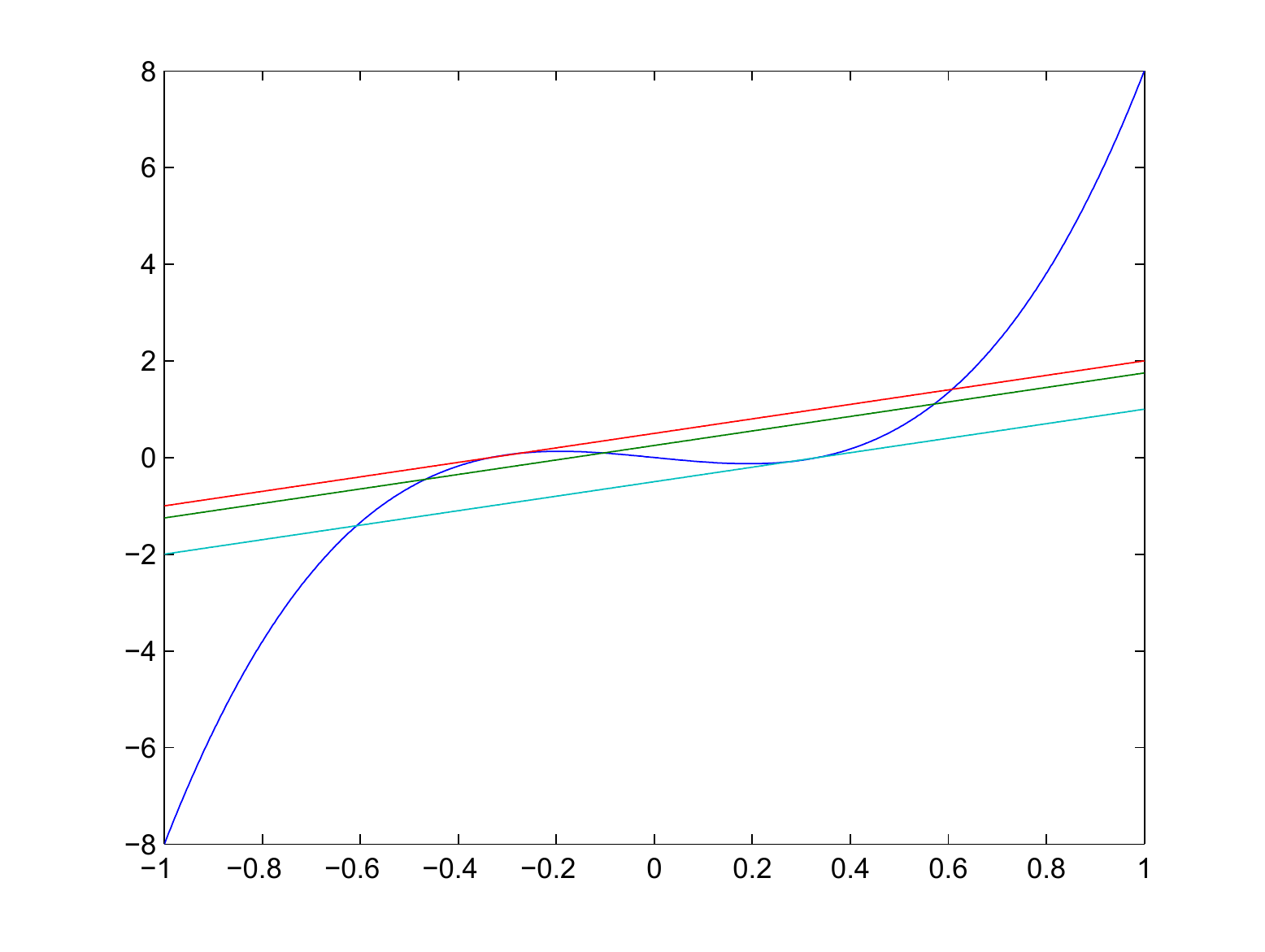}
\caption{The line $l_{1/2,1/2}$ is parallel to $l_{1/2,1}$ and $l_{1/2,-1},$ thus intersects with $I_1, I_2, I_3$}\label{}
\end{figure}
\begin{figure}[tbp!]
\includegraphics[trim=1.5in 3in 1.5in 3in,clip,width=4in]{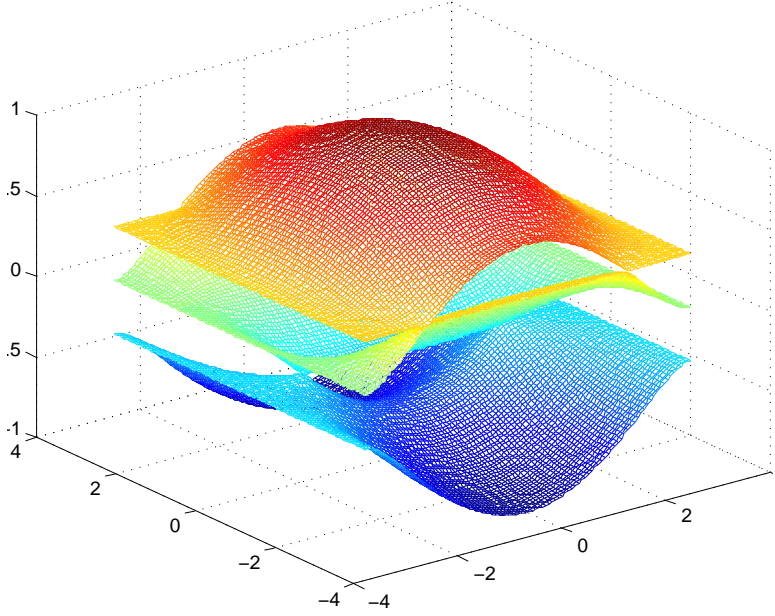}
\caption{Graph of root function for equation (\ref{E:eta_equation})}\label{}
\end{figure}
2. For each $a\in [0,2]$, $b\in [-1,1]$, the line $l_{a,b}$ intersects with each of $I_1, I_2, I_3$.
By our notation, $I_1, I_2, I_3$ are parts of the graph of function $y=9 x^3-x$ restricted to $[-1,-1/3]$, $[-1/3,1/3]$ and $[1/3,1]$ correspondingly.
Thus, the ranges of functions $F_1, F_2$ and $F_3$ are $[-1,-1/3], [-1/3,1/3]$ and $[1/3,1]$ respectively.

3. In what follows, we will find out when function $F_3$ attains its maximal and minimal values.
The similar argument applies for functions $F_1$ and $F_2$.

From part 2 we know that maximal value of $F_3$ is $1$ and its minimal value is $1/3$.

Plugging $x=1$ into equation (\ref{E:x_eta}) we have $a(3+b)=8$, which occurs only when $a=2,b=1$, i.e. $\theta=(0,0)$.
So function $F_1$ attains its maximum at $(0,0)$.

Similarly, we plug $x=1/3$ into the equation (\ref{E:x_eta}) to obtain $a(1+b)=0$, i.e. either $a=0$ or $b=-1$.
Now for each case when $a=0$ or $b=-1$, solve equation (\ref{E:x_eta}), we will see that the biggest root of equation (\ref{E:x_eta}) is equal to $1/3$ when $a=0, b\in[-1,1]$ or $a\in [0,2/3], b=-1$.
This means function $F_3$ attains its minimum at $(\pm\pi,\theta_2)$ for $\theta_2\in [-\pi,\pi]$ or $(\theta_1,\pm\pi)$ for $\theta_1\in [-\pi,-\theta_0]\cup[\theta_0,\pi]$.

4. Let us denote the linear level set of function $F(\theta)$ as $L$ (if such a set exists).
$$L:=\{(\theta_1,\theta_2)\in B|p_1^0\theta_1+p_2^0\theta_2=2k_0\pi\}, p_0=(p_1^0,p_2^0)\in\mathbb{Z}^2\backslash\{(0,0)\},k_0\in\mathbb{Z}.$$
For all $\theta$ belonging to $L$, equation (\ref{E:eta_equation}) has (at least) a constant solution, namely $c$.
Then
\begin{equation}\label{E:level_set}
9c^3-c=(\cos\theta_1+1)(3c+\cos\theta_2), \text{ for all } (\theta_1,\theta_2)\in L.
\end{equation}

If $p_2^0=0$, then the linear level set $L=\{(2k_0\pi/{p_1^0},\theta_2), \theta_2\in[-\pi,\pi]\}$.
Since all values in (\ref{E:level_set}) are constant except $\theta_2$ changing from $-\pi$ to $\pi$,
the expression (\ref{E:level_set}) is true only if $\cos\theta_1+1=0$.
This would mean $\theta_1=\pm\pi$, i.e. $L=\{(\pm\pi,\theta_2),\theta_2\in[-\pi,\pi]\}$.

In case $p_2^0\neq 0$, the linear level set $L$ can be rewritten as
$$L=\left\{(\theta_1,\theta_2)\in B \big| \theta_2=\frac{-p_1^0\theta_1+2k_0\pi}{p_2^0}\right\}$$
and so (\ref{E:level_set}) becomes
$$9c^3-c=(\cos\theta_1+1)\left(3c+\cos\frac{-p_1^0\theta_1+2k_0\pi}{p_2^0}\right), \text{ for all }\theta_1\in [-\pi,\pi].$$
Since $c$ is a constant and $\theta_1$ runs from $-\pi$ to $\pi$, we have
$$9c^3-c=(\cos\pi+1)\left(3c+\cos\frac{-p_1^0\pi+2k_0\pi}{p_2^0}\right)=0.$$
Thus by solving the equation $9c^3-c=0$, we conclude that constant $c$ can be $0, 1/3$ or $-1/3$.
Plugging each value of $c$ into (\ref{E:level_set}), we can get all the linear level sets of $F(\theta)$ as below:
\begin{enumerate}[]
   \item $\{(\pm\pi,\theta_2), \theta_2\in [-\pi,\pi]\},$
   \item $\{(\theta_1,\pm\pi),\theta_1\in [-\pi,\pi]\},$
   \item $\{(\theta_1,\pm\pi/2),\theta_1\in [-\pi,\pi]\},$
   \item $\{(\theta_1,0),\theta_1\in [-\pi,\pi]\}$.
\end{enumerate}
As a consequence, function $F(\theta)$ does not have any flat branches.
\end{proof}

\section{Final remarks and acknowledgments}
\begin{enumerate}

\item The lithographite structure $G$ is not completely flat (due to the presence of four bonds converging at some vertices) \cite{BucCas}. This, however, does not change the quantum network model that we study. Additionally, this structure is the least stable of the 14 configurations studied in \cite{Ivanovskii}. It is, however, the easiest to study among graphynes. Indeed, complexity of the analysis grows with the number of atoms contained in a fundamental domain. This makes graphene the simplest (with just two atoms in an appropriately chosen fundamental domain) and the structure $G$ of this work the next simplest, with three atoms.

\item The structure $G$ has much less than a honeycomb symmetry, which does not prevent it from displaying Dirac cones. This happens also in various other graphyne structures (e.g. \cite{Ivanovskii}).

\item As Figure \ref{F:locone} shows, the Dirac cones in this structure are highly anisotropic, indicating a very directional conductance. This directionality effect, very sharply presented in the structure under consideration, has been noticed for several graphyne structures (e.g., \cite{Gorling}). This is one of the features making graphynes fascinating.
\begin{center}
\begin{figure}[ht!]
\includegraphics[scale=0.4]{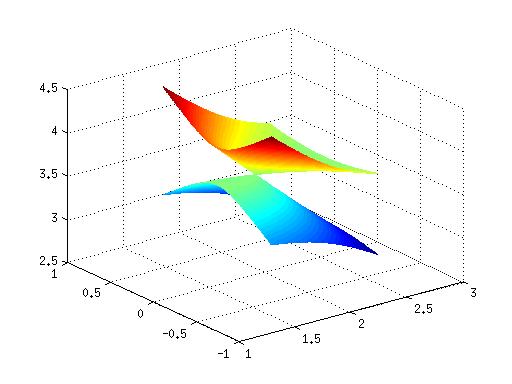}
\caption{The local view of one of the stable Dirac cones. One notices that conductivity will be suppressed in one direction.}\label{F:locone}
\end{figure}
\end{center}

\item In contrast to the density functional calculations of \cite{Ivanovskii} for the structure $G$ of this work, our results show no band overlap. Even though we and the authors of \cite{Ivanovskii} study different approximate models, they have the same geometry and thus are expected to show coherent fetures.

    It is interesting to observe that computations for some other graphyne structures (e.g., \# 10) in \cite{Ivanovskii} do correspond well to such effects arising in the quantum network models.

\item The bound states arising in quantum graph models of graphene and graphyne (which cannot arise for the full dimensional periodic Schr\"odinger equations, e.g. \cite{Kuch_Floquet_book,Reed_Simon_4}), should probably still suggest existence of some rather flat bands/ strong resonances (compare with the photonic crystal situation in \cite{KuKu}). This is confirmed by computations in \cite{Ivanovskii}, except for the case of the lithographite structure $G$, which shows such flat bands in the quantum graph model, while these are absent in the results of  \cite{Ivanovskii}.

\item Due to the small symmetry group of the structure, the class of all invariant potentials is somewhat wider than the one we considered. Namely, it comes from two potentials: $q_0$ on $[0,1]$ and $q_1$ (on $[-1, 1]$)\footnote{In this work, the potential $q_1$ is just the concatenation of two copies of $q_0$.}. It is somewhat harder technically to study this more general class of potentials, but the authors plan to address this issue in a future article.

\item The Corollary \ref{Cor} and its analog for graphene (see \cite{Kuch_Post}) suggest that any non-trivial ``obstacle'' (potential) along the edges opens spectral gaps. For instance, the ``supergraphene'' structure \# 10 in \cite{Ivanovskii}, which differs from the standard graphene by presence of extra two atoms along each edge, is expected to and indeed does show gap opening (as well as very flat bands at some gap edges, similar to the ones in our results).

\item Besides studying more general invariant potentials, the aim of a future work is to consider spectra of nanotubes folded from the graphyne structure of this work. This, in particular, is the reason of the presence of the statement (4) of Proposition \ref{L:F_property}, which will play significant role there.
\end{enumerate}
The authors express their gratitude to the referees for their substantial remarks and additional references.

\end{document}